\theoremstyle{plain}
\newtheorem{theorem}{Theorem}[section]
\newtheorem{proposition}[theorem]{Proposition}
\newtheorem{lemma}[theorem]{Lemma}
\newtheorem{corollary}[theorem]{Corollary}
\theoremstyle{definition}
\newtheorem{definition}[theorem]{Definition}
\newcommand{\Z}{\mathbb{Z}}
\newcommand{\N}{\mathbb{N}}
\newcommand{\R}{\mathbb{R}}
\newcommand{\C}{\mathbb{C}}
\newcommand{\set}[2]{\{#1|\ #2\}}
\newcommand{\sub}{\subseteq}
\newcommand{\gen}[1]{\left\langle #1\right\rangle}
\newcommand{\M}{\mathrm{M}}
\newcommand{\Tr}{\mathrm{Tr}}
\newcommand{\GL}{\mathrm{GL}}
\newcommand{\Ad}{\mathrm{Ad}}
\newcommand{\Inn}{\mathrm{Int}}
\newcommand{\lcm}{\mathrm{lcm}}
\newcommand{\Aut}{\mathrm{Aut}}
\newcommand{\diag}{\mathrm{diag}}
\newcommand{\SL}{\mathrm{SL}}
\newcommand{\Sp}{\mathrm{Sp}}
\newcommand{\gl}{\mathrm{gl}}
\renewcommand{\sl}{\mathrm{sl}}
\renewcommand{\P}{\mathcal{P}}
\renewcommand{\H}{\mathcal{H}}
\newenvironment{mat2}[4]{\left(\begin{array}{cc} #1&#2\\#3&#4 \end{array}\right)}{}
\newenvironment{matt2}[4]{\left[\begin{array}{cc} #1&#2\\#3&#4 \end{array}\right]}{}
\newenvironment{smat2}[4]{\begin{smallmatrix}#1&#2\\#3&#4\end{smallmatrix}}{}
\begin{document}
\title[Symmetries of the finite Heisenberg group]{Symmetries
       of the finite Heisenberg group for composite systems}
\author{M Korbel\'a\v r}
\address{Department of Algebra \\
 Faculty of Mathematics and Physics\\
 Charles University\\ Sokolovsk\'{a} 83, 186 75 Prague 8,
 Czech Republic}
\email{miroslav.korbelar@gmail.com}

\author{J Tolar}
\address{Department of Physics \\
 Faculty of Nuclear Sciences and  Physical  Engineering\\
 Czech Technical University in Prague\\ B\v rehov\'a 7,
 115 19 Prague 1, Czech Republic}
\email{jiri.tolar@fjfi.cvut.cz}


 \begin{abstract}
Symmetries of the finite Heisenberg group represent an important
tool for the study of deeper structure of finite-dimensional quantum
mechanics. As is well known, these symmetries are properly expressed
in terms of certain normalizer. This paper extends previous
investigations to composite quantum systems consisting of two
subsystems --- qudits --- with arbitrary dimensions $n$ and $m$. In
this paper we present detailed descriptions --- in the group of
inner automorphisms of $\GL(nm,\C)$ --- of the normalizer of the
Abelian subgroup generated by tensor products of generalized Pauli
matrices of orders $n$ and $m$. The symmetry group is then given by
the quotient group of the normalizer.
\end{abstract}

\maketitle \vspace{2ex}

\tableofcontents


\noindent Keywords: finite Heisenberg group, generalized Pauli
matrices, $\GL(n,\C)$, inner automorphisms, normalizer, composite
quantum system

\section{Introduction}

The special role of the finite Heisenberg group (also known as the
Heisenberg-Pauli group) has been recognized in mathematics as well
as in physics. In the former case, it is closely connected with the
Pauli grading of the classical Lie algebras of the type $\sl(n,\C)$
\cite{PZ88,HPPT02}. In the latter case, its elements provide the
basic quantum observables in the finite-dimensional Hilbert spaces
\cite{Schwinger} of quantum mechanics.

It is then clear that automorphisms or symmetries of the finite
Heisenberg group play very important role in the investigation of
Lie algebras on the one hand \cite{HPPT02,HNPT06,PST06} and of
quantum mechanics in finite dimensions on the other
\cite{Vourdas,SulcTolar07}. These symmetries find proper expression
in the notion of the quotient group of certain normalizer
\cite{PZ89}. The groups of symmetries given by inner automorphisms
were described in \cite{HPPT02} as isomorphic to $\SL(2,\Z_n)$ for
arbitrary $n \in \N$ and as $\Sp(4,\Z_p)$ for $n=p^2$, $p$ prime in
\cite{PST06}.

Note that such symmetry groups are finite-dimensional analogues of
the group $\Sp(2N,\R)$ of linear canonical transformations of the
continuous phase space $\R^{2N}$. Their matrix forms then correspond
to the metaplectic representation of $\Sp(2N,\R)$ in the
infinite-dimensional Hilbert space $L^2(\R^N)$ in which the
Heisenberg Lie algebra is represented \cite{Folland}.

The present investigation is devoted to symmetries of the finite
Heisenberg group for systems composed of two subsystems (qudits) of
arbitrary dimensions $n$, $m$. Such composite systems have been
studied over finite fields, e.g. for both $n=m$ prime \cite{PST06}.
There exist numerous studies of various aspects of the finite
Heisenberg group over finite fields in connection with
finite-dimensional quantum mechanics (FDQM) in Hilbert spaces of
prime or prime power dimensions (see e.g.
\cite{SulcTolar07,Balian,Neuhauser,Vourdas07}).

Our main motivation to study symmetries of the finite Heisenberg
group not in prime or prime power dimensions but for arbitrary
dimensions stems from our previous research where we obtained
results valid for arbitrary dimensions
\cite{HPPT02,StovTolar84,TolarChadz}. Recent paper
\cite{VourdasBanderier} belongs to this direction, too, by dealing
with quantum tomography over modular rings. Also papers
\cite{Digernes,DigernesVV} support our motivation, since they show
that FDQM over growing arbitrary dimensions yields surprisingly good
approximations of ordinary quantum mechanics on the real line.

This paper can be viewed also as a contribution to the study of the
up to now unsolved problem --- the existence or non-existence of the
maximal set of $N+1$ mutually unbiased bases in Hilbert spaces of
arbitrary dimensions $N$. Let us point out that a constructive
existence proof for $N$ prime presented in \cite{SulcTolar07} was
based on consistent use of the symmetry group $\SL(2,\Z_N)$. It is
well known that such maximal sets exist in all prime and prime power
dimensions \cite{WoottersFields89} but the maximal number of
mutually unbiased bases for composite dimensions is unknown as yet.
If known, it would provide a very important contribution to quantum
communication science, where mutually unbiased bases serve as basic
ingredient of secure protocols in quantum cryptography
\cite{Gisin,Alber}. Perhaps this paper may help to unveil the
structure of the maximal set of mutually unbiased bases.

After introductory sections 2, 3 and 4 fixing notation of FDQM,
inner automorphisms and tensor products, in section 5 the group
$\mathcal{G}$ is defined. The central part of the paper is section 6
where the normalizer is completely described and the main Theorem
\ref{37} proved. It contains our principal result that the symmetry
group, being the quotient group of the normalizer, is indeed
isomorphic to $\mathcal{G}$. Moreover, in section 7 the group
$\mathcal{G}$ is described as a suitable extension of groups. More
detailed characterizations of $\mathcal{G}$ are postponed to the
Appendix.

\section{Finite-dimensional quantum mechanics}

Finite-dimensional quantum mechanics (FDQM) \cite{Vourdas,Kibler}
has been developed as quantum mechanics on configuration spaces
given by finite sets equipped with the structure of a finite Abelian
group \cite{StovTolar84}. In the first step one may consider a
single cyclic group $\Z_n$ for given $n\in\N$ as the underlying
configuration space.
\begin{definition}\label{4}
For a given $n\in\N$ set
 $$\omega_{n}:=e^{2\pi \mathrm{i}/n}\in\C.$$
Denote $Q_{n}$ and $P_{n}$ the {\it generalized Pauli matrices} of order $n$,
$$Q_{n}:=\diag(1,\omega_{n},\omega_{n}^{2},\dots,\omega_{n}^{n-1})\in
\GL(n,\C)$$
 and
$$P_{n}\in \GL(n,\C),\ \text{where}\ (P_{n})_{i,j}:=\delta_{i,j-1}, \quad
i,j\in\Z_n.$$
The  $n\times n$ unit matrix will be denoted as $I_{n}$.
 The subgroup of unitary matrices in $\GL(n,\C)$ generated by
$Q_{n}$ and $P_{n}$,
$$\Pi_{n}:=\{ \omega_{n}^j Q_{n}^k P_{n}^l \vert
 j,k,l\in\{0,1,\dots,n-1 \} \}$$
 is called the {\it finite Heisenberg group}.
\end{definition}

The special role of the generalized Pauli matrices has been
confirmed in physical literature as the cornerstone of FDQM, since
their integral powers have physical interpretation of exponentiated
position and momentum operators in position representation
\cite{Schwinger}. As quantum mechanical operators, $Q_{n}$ and
$P_{n}$ act in the $n$-dimensional Hilbert space $\H_n =
\ell^2(\Z_n)$. Further properties of $\Pi_{n}$ are contained in the
following obvious lemma.
\begin{lemma}\label{5}
\begin{enumerate}
\item[(1)] The order of $\Pi_{n}$ is $n^3$.
\item[(2)] The center of $\Pi_{n}$ is
$\{\omega_{n}^j I_n \vert  j\in\{0,1,\dots,n-1 \} \}$.
\item[(3)] $P_{n}Q_{n}=\omega_{n}Q_{n}P_{n}$.
\end{enumerate}
\end{lemma}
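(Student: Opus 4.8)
The plan is to handle the three parts in logical rather than numerical order, establishing the commutation relation (3) first, since both (1) and (2) rest on it. All three statements reduce to elementary matrix computations, so no deep idea is required; the only care needed is the bookkeeping of exponents modulo $n$.

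For (3), I would compute the two products by letting them act on the standard basis $e_0,\dots,e_{n-1}$ of $\C^n$. The definitions give $Q_n e_i = \omega_n^i e_i$ and $P_n e_i = e_{i-1}$ (indices mod $n$), whence $P_n Q_n e_i = \omega_n^i e_{i-1}$ while $Q_n P_n e_i = \omega_n^{i-1} e_{i-1}$; comparing the two yields $P_n Q_n = \omega_n Q_n P_n$ at once. Iterating, I would record the general form $P_n^l Q_n^k = \omega_n^{kl} Q_n^k P_n^l$, which is the workhorse for the remaining parts.

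For (1), first note that $Q_n^n = P_n^n = I_n$ (each diagonal entry of $Q_n$ is an $n$-th root of unity, and $P_n$ cyclically permutes the basis), so every product of generators can be brought, via the general commutation relation, into the normal form $\omega_n^j Q_n^k P_n^l$ with $j,k,l\in\{0,\dots,n-1\}$; hence $|\Pi_n|\le n^3$. The real content is that these $n^3$ monomials are distinct, which I would see by evaluating one on the basis: $\omega_n^j Q_n^k P_n^l$ sends $e_i$ to $\omega_n^{\,j+k(i-l)} e_{i-l}$, so the position $i-l$ of the single nonzero entry in each column determines $l$, reading the scalar off one column then pins down $j$, and comparing two columns pins down $k$. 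This gives $|\Pi_n| = n^3$. For (2), the inclusion $\{\omega_n^j I_n\}\subseteq Z(\Pi_n)$ is immediate since scalar matrices are central; conversely, $g=\omega_n^j Q_n^k P_n^l$ lies in the center iff it commutes with the generators $Q_n$ and $P_n$. Using the general relation, $g Q_n = \omega_n^{\,j+l} Q_n^{k+1} P_n^l$ whereas $Q_n g = \omega_n^{\,j} Q_n^{k+1} P_n^l$, so by the uniqueness of the normal form from (1) equality forces $\omega_n^l = 1$, i.e. $l=0$; commuting with $P_n$ forces $k=0$ symmetrically, leaving $g=\omega_n^j I_n$.

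The computations are routine, and the lemma is rightly called obvious; the only genuine step is the distinctness of the $n^3$ monomials in (1), which I expect to be the crux. Everything else, including the centrality argument, then follows formally from the commutation relation together with this uniqueness.
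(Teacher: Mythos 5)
Your proposal is correct in all three parts: the basis computation giving $P_nQ_ne_i=\omega_n^i e_{i-1}$ versus $Q_nP_ne_i=\omega_n^{i-1}e_{i-1}$ establishes (3), the normal form $\omega_n^j Q_n^k P_n^l$ with the monomial-matrix distinctness argument gives $|\Pi_n|=n^3$, and the centrality argument correctly extracts $k=l=0$. The paper itself offers no proof to compare against --- it labels the lemma ``obvious'' and states it without argument --- but your writeup supplies exactly the standard verification, and your style matches the index computations the paper does spell out in its proof of Lemma \ref{9}; note only that in part (2) you do not actually need uniqueness of the normal form, since $\omega_n^{j+l}Q_n^{k+1}P_n^l=\omega_n^{j}Q_n^{k+1}P_n^l$ forces $\omega_n^l=1$ directly by invertibility of $Q_n^{k+1}P_n^l$.
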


We shall also need the following general property of $\Pi_n$.
\begin{lemma}\label{9}
Let $R$ be a $\C$-algebra with unit $1$, $\M_{n}(R)$ be the algebra
of $n\times n$ matrices with entries from $R$ and let $\Pi_{n}$ be
naturally embedded in $\M_{n}(R)$. Let $C\in\M_{n}(R)$ be such that
$P_{n}C=\lambda CP_{n}$ and $Q_{n}C=\mu CQ_{n}$ for some
$\lambda,\mu\in\C\setminus\{0\}$. Then there are $H\in\Pi_{n}$ and
$a\in R$ such that $C=aH $.
\end{lemma}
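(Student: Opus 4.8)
The plan is to argue entrywise. Write $C=(C_{ij})$ with $i,j\in\Z_n$ and $C_{ij}\in R$, and recall from Definition \ref{4} that $(Q_n)_{ii}=\omega_n^{\,i}$ while $(P_n)_{ij}=\delta_{i,j-1}$. First I would exploit the diagonal relation $Q_nC=\mu CQ_n$: comparing $(i,j)$ entries gives $\omega_n^{\,i}C_{ij}=\mu\,\omega_n^{\,j}C_{ij}$, that is $(\omega_n^{\,i}-\mu\omega_n^{\,j})C_{ij}=0$. Since every nonzero complex number is invertible in the $\C$-algebra $R$ (as $c\cdot 1_R$ with inverse $c^{-1}\cdot 1_R$), this forces $C_{ij}=0$ whenever $\mu\neq\omega_n^{\,i-j}$. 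Hence either $C=0$, in which case $C=0\cdot I_n$ and we are done, or $\mu$ is a power of $\omega_n$, say $\mu=\omega_n^{\,k}$, and $C$ is supported on the single cyclic diagonal $i-j\equiv k\pmod n$.

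Second, I would record the surviving entries by $d_j:=C_{j+k,\,j}\in R$ and feed the shift relation $P_nC=\lambda CP_n$ into them. A direct computation of the two products yields $(P_nC)_{ij}=C_{i+1,\,j}$ and $(CP_n)_{ij}=C_{i,\,j-1}$, so the relation reads $C_{i+1,j}=\lambda C_{i,j-1}$; restricting to the support diagonal (take $i=j+k-1$) this becomes the recurrence $d_j=\lambda\,d_{j-1}$ for every $j\in\Z_n$. Iterating gives $d_j=\lambda^{\,j}d_0$, and closing the cycle once around $\Z_n$ forces $(\lambda^n-1)d_0=0$. Again by invertibility of nonzero scalars, either $d_0=0$, whence $C=0$, or $\lambda^n=1$, i.e. $\lambda=\omega_n^{\,l}$ for some $l\in\{0,\dots,n-1\}$.

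Finally, with $\mu=\omega_n^{\,k}$, $\lambda=\omega_n^{\,l}$ and $a:=d_0$, I would exhibit the explicit group element. Computing $(Q_n^{\,s}P_n^{\,t})_{ij}=\omega_n^{\,si}\,\delta_{i,\,j-t}$ shows this matrix is supported on $i-j\equiv -t$ with value $\omega_n^{\,si}$ there; choosing $s=l$, $t=-k$ and the central phase $\omega_n^{-lk}$, set $H:=\omega_n^{-lk}\,Q_n^{\,l}\,P_n^{\,-k}\in\Pi_n$. Its $(j+k,\,j)$ entry is $\omega_n^{-lk}\cdot\omega_n^{\,l(j+k)}=\omega_n^{\,lj}$, all other entries vanishing, so $aH$ has $(j+k,\,j)$ entry $a\,\omega_n^{\,lj}=\omega_n^{\,lj}d_0=d_j=C_{j+k,\,j}$ and vanishes exactly where $C$ does. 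Thus $C=aH$, as required.

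The computations here are routine; the only points that genuinely need care are that nonzero complex scalars remain invertible in $R$, so that the two quasi-commutation conditions really annihilate the off-support entries even when $R$ has zero divisors, and the bookkeeping of exponents ensuring that the constructed $H$ lands in $\Pi_n$ with the correct central phase $\omega_n^{-lk}$. I therefore expect the verification of the exponent matching in the last step to be the most error-prone part rather than a conceptual obstacle.
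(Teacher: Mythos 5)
Your proof is correct and takes essentially the same route as the paper's: the relation with $Q_n$ confines $C$ to a single cyclic diagonal (using, as you rightly flag, that nonzero complex scalars are invertible in the $\C$-algebra $R$), and the relation with $P_n$ gives the recurrence $d_j=\lambda d_{j-1}$ around the cycle, forcing $\lambda^n=1$ and determining $C$ up to one free element $a\in R$. The only difference is cosmetic: the paper factors $C=P_n^{j_0-i_0}D$ with $D$ diagonal and takes $H=P_n^{j_0-i_0}Q_n^{k}$, whereas you stay entrywise throughout and exhibit $H=\omega_n^{-lk}Q_n^{l}P_n^{-k}$, which agrees with the paper's $H$ up to the central phase you correctly include.
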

\begin{proof}
We can suppose that $C\neq 0$. Let
$(Q_{n})_{i,j}=\delta_{i,j}\omega_{n}^{i-1}$ and
$(P_{n})_{i,j}=\delta_{i,j-1}$.  Then
$$(Q_{n}C)_{i,j}=\sum\limits_{k=1}^{n}\delta_{i,k}\omega_{n}^{i-1}C_{k,j}=\omega_{n}^{i-1}C_{i,j}$$
and $$\mu(CQ_{n})_{i,j}=\sum\limits_{k=1}^{n}\mu
C_{i,k}\delta_{k,j}\omega_{n}^{j-1}=\mu\omega_{n}^{j-1}C_{i,j}.
 $$
Hence $(\omega_{n}^{i-j}-\mu)C_{i,j}=0$ for every $i,j$. Since
$C\neq 0$, there are $i_{0}$ and $j_{0}$ such that
$C_{i_{0},j_{0}}\neq 0$. Hence $\mu=\omega_{n}^{i_{0}-j_{0}}$ and
$C_{i,j}=0$ if $i-j\neq i_{0}-j_{0}$ modulo $n$. Thus
$C_{i,j}=\delta_{i,j-j_{0}+i_{0}}c_{i}$ for some
$c_{1},\dots,c_{n}\in R$. Put
$D=\diag(c_{1},\dots,c_{n})\in\M_{n}(R)$. Then, clearly,
$C=P_{n}^{j_{0}-i_{0}}D$ and $D\neq 0$. By assumption,
$P_{n}D=\lambda DP_{n}$. We have
$$(P_{n}D)_{i,j}=\sum\limits_{k=1}^{n}\delta_{i,k-1}\delta_{k, j}c_{j}=\delta_{i,j-1}c_{j}$$ and
$$\lambda(DP_{n})_{i,j}=\sum\limits_{k=1}^{n}\lambda \delta_{i,
k}c_{i}\delta_{k,j-1}=\lambda\delta_{i,j-1}c_{i}.$$ Hence
$c_{j}-\lambda c_{j-1}=0$ for every $j$. It follows that,
$c_{j}=\lambda^{j-1}c_{1}$ and $c_{1}=\lambda^{n}c_{1}$. Since
$D\neq 0$, we have $c_{1}\neq 0$ and $\lambda^{n}=1$. Hence
$\lambda=\omega_{n}^{k}$ and $D=Q_{n}^{k}(c_{1}I_{n})$. Now put
$H=P_{n}^{j_{0}-i_{0}}Q_{n}^{k}\in\Pi_{n}$ and $a=c_{1}\in R$. Then
$C=aH$.
\end{proof}

\section{Inner automorphisms of $\GL(n,\C)$}

For the notion of inner automorphisms, we shall consider (instead of
the Lie algebra $\gl(n,\C)$) the matrix group $\GL(n,\C)$ of all
invertible matrices $n\times n$ over $\C$, since the finite
Heisenberg group was introduced as a subgroup of $\GL(n,\C)$.

\begin{definition}\label{1}
For $M\in\GL(n,\C)$ let $\Ad_{M}\in \Inn(\GL(n,\C))$ be the {\it
inner automorphism} of the group $\GL(n,\C)$ induced by operator
$M\in \GL(n,\C)$, i.e.
 $$\Ad_{M}(X)=MXM^{-1}\, \textrm{for}\, X\in \GL(n,\C).$$
\end{definition}

The following lemma summarizes relevant properties of $\Ad_{M}$.
\begin{lemma}\label{2}
Let $M,N\in \GL(n,\C)$. Then:
\begin{enumerate}
\item[(i)] $\Ad_M \Ad_N=\Ad_{MN}$.
\item[(ii)] $(\Ad_M)^{-1}=\Ad_{M^{-1}}$.
\item[(iii)] $\Ad_{M}=\Ad_{N}$ if and only if  there is a constant
$0\neq\alpha\in\C$ such that $M=\alpha N$.
\end{enumerate}
\end{lemma}

\begin{definition}\label{2a}
We define $\P_n$ as the group
 $$ \P_n=\{\Ad_{Q_n^i P_n^j} \vert (i,j)\in \Z_n \times \Z_n \}.$$
 It is an Abelian subgroup of $\Inn(\GL(n,\C))$ and is generated by
 two commuting automorphisms $\Ad_{Q_n}$, $\Ad_{P_n}$,
 $$ \P_n = \gen{\Ad_{Q_n},\Ad_{P_n}}.$$
\end{definition}
A geometric view is sometimes useful that $\P_n$ is isomorphic to
the \textit{quantum phase space} identified with the Abelian group
$\Z_n \times \Z_n$ \cite{SulcTolar07}.

\section{Tensor products}

According to the well-known rules of quantum mechanics, FDQM on
$\Z_n$ can be extended in a straightforward way to finite direct
products
 $\Z_{m_1}\times\dots\times\Z_{m_f}$.
Here the cyclic groups involved can be viewed as describing
independent quantum degrees of freedom \cite{Schwinger}. The Hilbert
space for FDQM of such a composite system is constructed as the
tensor product
 $\H = \H_{m_1}\otimes\dots\otimes\H_{m_f}$.

In this paper, in order to obtain concrete results, we restrict our
study to special configuration spaces involving just two factors
$\Z_n \times \Z_m$ with arbitrary $n,m\in \N$. Then the
corresponding Hilbert space of FDQM will be $\H = \H_n \otimes \H_m$
of dimension $N=nm$.

We recall the usual properties of the matrix tensor product
$\otimes$. Let $A,A'\in\GL(n,\C)$, $B,B'\in\GL(m,\C)$ and
$\alpha\in\C$. Then:
\begin{enumerate}
\item[(i)] $(A\otimes B)(A'\otimes B')=AA'\otimes BB'$.
\item[(ii)] $\alpha(A\otimes B)=(\alpha A)\otimes B=A\otimes(\alpha B)$.
\item[(iii)] $A\otimes B=I_{nm}$ if and only if
there is $0\neq\alpha\in\C$ such that $A=\alpha I_{n}$ and $B=\alpha^{-1}I_{m}$.
\end{enumerate}

\begin{definition}\label{6}
 Define
$$\P_{n}\otimes \P_{m}:=\set{\Ad_{A\otimes B}}{ A\in\Pi_{n},
B\in\Pi_{m}}\sub \Inn(\GL(nm,\C)).$$
 Further we shall work with generating elements of $\Pi_n \times
 \Pi_m$,
$$A_{1}:=P_{n}\otimes I_{m},\ \ \ A_{2}:=Q_{n}\otimes I_{m},\ \ \
A_{3}:=I_{n}\otimes P_{m},\ \ \ A_{4}:=I_{n}\otimes Q_{m},$$
 and the corresponding inner automorphisms
$$e_{i}:=\Ad_{A_{i}},\ \ \textrm{for}\ \ i=1,2,3,4.$$
\end{definition}
As an easy observation we have the folowing:
\begin{lemma}\label{7}
$\P_{n}\otimes \P_{m}=\prod\limits_{i=1}^{4}\gen{e_{i}}$,
 the direct product of groups $\gen{e_{i}}$ generated by $e_{i}$.
\end{lemma}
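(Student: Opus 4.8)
The plan is to establish the equality first as a product of sets and then to upgrade it to an internal direct product by checking that the four cyclic factors pairwise commute and that the expression of each element is unique. The genuine content lies in this last uniqueness step.

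First I would take an arbitrary generator $\Ad_{A\otimes B}$ of $\P_n\otimes\P_m$ with $A\in\Pi_n$, $B\in\Pi_m$. Writing $A=\omega_n^{\,p}Q_n^kP_n^l$ and $B=\omega_m^{\,q}Q_m^{k'}P_m^{l'}$ as in Definition~\ref{4} and absorbing the central scalars by the tensor product property (ii) and Lemma~\ref{2}(iii), I may assume $A=Q_n^kP_n^l$ and $B=Q_m^{k'}P_m^{l'}$. Using property (i) I would decompose
$$A\otimes B=\big((Q_n^k\otimes I_m)(P_n^l\otimes I_m)\big)\big((I_n\otimes Q_m^{k'})(I_n\otimes P_m^{l'})\big)=A_2^kA_1^lA_4^{k'}A_3^{l'},$$
so that by Lemma~\ref{2}(i) one gets $\Ad_{A\otimes B}=e_2^ke_1^le_4^{k'}e_3^{l'}$. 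Since, conversely, each $e_i$ manifestly lies in $\P_n\otimes\P_m$, this yields the set equality $\P_n\otimes\P_m=\gen{e_1}\gen{e_2}\gen{e_3}\gen{e_4}$.

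Next I would verify that the four subgroups commute. For factors acting on different tensor legs (for instance $e_1$ and $e_3$) the underlying matrices commute outright, while for the pairs $(e_1,e_2)$ and $(e_3,e_4)$ the relation $P_nQ_n=\omega_nQ_nP_n$ of Lemma~\ref{5}(3) shows that the matrices commute up to a central scalar, which is killed upon passing to $\Ad$ by Lemma~\ref{2}(iii). Hence all $e_i$ pairwise commute, the product set above is a subgroup equal to $\gen{e_1,e_2,e_3,e_4}$, and each $\gen{e_i}$ is central in it.

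The heart of the argument, and the step I expect to be the main obstacle, is directness: I must show that the natural surjection $\gen{e_1}\times\gen{e_2}\times\gen{e_3}\times\gen{e_4}\to\P_n\otimes\P_m$ is injective. Suppose $e_1^ae_2^be_3^ce_4^d=\mathrm{id}$. Reassembling the factors as above gives $\Ad_{(P_n^aQ_n^b)\otimes(P_m^cQ_m^d)}=\mathrm{id}$, so by Lemma~\ref{2}(iii) the matrix $(P_n^aQ_n^b)\otimes(P_m^cQ_m^d)$ is a nonzero scalar multiple of $I_{nm}$. Applying the tensor product property (iii) decouples the two legs and forces each of $P_n^aQ_n^b$ and $P_m^cQ_m^d$ to be scalar, i.e. to lie in the centers described in Lemma~\ref{5}(2). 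Finally, since the $n^3$ elements $\omega_n^jQ_n^kP_n^l$ are pairwise distinct (Lemma~\ref{5}(1)), rewriting $P_n^aQ_n^b=\omega_n^{ab}Q_n^bP_n^a$ shows it is scalar only when $a\equiv b\equiv 0 \pmod n$, and likewise $c\equiv d\equiv 0 \pmod m$. This gives uniqueness of the expression and hence the internal direct product. The only delicate point is that property (iii) genuinely separates the two tensor factors, so that no nontrivial relation can mix the $\Z_n$-part with the $\Z_m$-part; everything else is bookkeeping with the commutation relation and the absorption of central scalars.
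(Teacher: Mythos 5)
Your proof is correct. In fact the paper offers no proof of Lemma \ref{7} at all --- it is introduced with the words ``As an easy observation'' --- so there is nothing to compare against; your write-up simply supplies the routine details the authors suppressed, and it is surely the intended argument. Every step checks out: the decomposition $\Ad_{A\otimes B}=e_2^k e_1^l e_4^{k'} e_3^{l'}$ after absorbing central scalars via Lemma \ref{2}(iii), the pairwise commutation of the $e_i$ (scalars from $P_nQ_n=\omega_nQ_nP_n$ being killed by $\Ad$), and --- the only step with genuine content, as you say --- the directness, where tensor-product property (iii) correctly decouples the two legs, and the fact that $|\Pi_n|=n^3$ (Lemma \ref{5}(1)) makes the $n^3$ elements $\omega_n^jQ_n^kP_n^l$ pairwise distinct, so that a scalar $P_n^aQ_n^b=\omega_n^{ab}Q_n^bP_n^a$ forces $a\equiv b\equiv 0\pmod n$, and likewise $c\equiv d\equiv 0\pmod m$.
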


\begin{lemma}\label{10}
The centralizer of $\P_{n}\otimes \P_{m}$ in $\Inn(\GL(nm,\C))$ is
equal to  $\P_{n}\otimes \P_{m}$ (i.e.,
$C_{\Inn(\GL(nm,\C))}(\P_{n}\otimes \P_{m})=\P_{n}\otimes \P_{m}$).
\end{lemma}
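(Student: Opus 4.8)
The plan is to prove the two inclusions separately. The inclusion $\P_n\otimes\P_m\subseteq C_{\Inn(\GL(nm,\C))}(\P_n\otimes\P_m)$ is immediate once I observe that $\P_n\otimes\P_m$ is abelian: by Lemma~\ref{7} it equals the direct product $\prod_{i=1}^4\gen{e_i}$ of cyclic groups, so any two of its elements commute, and in particular each of them centralizes the whole group. For the reverse inclusion I first translate the centralizer condition into matrix equations. An element $\Ad_C\in\Inn(\GL(nm,\C))$ lies in the centralizer iff it commutes with every generator $e_i=\Ad_{A_i}$, and by Lemma~\ref{2}(i),(iii) this is equivalent to the existence of scalars $0\neq\alpha_i\in\C$ with $CA_i=\alpha_iA_iC$ for $i=1,2,3,4$.

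The heart of the argument is a twofold application of Lemma~\ref{9} through the algebra identification $\M_{nm}(\C)\cong\M_n(\M_m(\C))$, i.e.\ I regard $C$ as an $n\times n$ matrix over $R=\M_m(\C)$. Under this identification $A_1=P_n\otimes I_m$ and $A_2=Q_n\otimes I_m$ are exactly the natural embeddings of $P_n,Q_n\in\Pi_n$ in $\M_n(R)$, so the relations $CA_1=\alpha_1A_1C$ and $CA_2=\alpha_2A_2C$ read $P_nC=\alpha_1^{-1}CP_n$ and $Q_nC=\alpha_2^{-1}CQ_n$. Lemma~\ref{9} (with $\lambda=\alpha_1^{-1}$, $\mu=\alpha_2^{-1}$) then gives $C=W\otimes a$ for some $W\in\Pi_n$ and some nonzero $a\in R=\M_m(\C)$. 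Substituting this form into the two remaining relations and using the multiplicativity of $\otimes$ together with the fact that $W\otimes X=W\otimes Y$ forces $X=Y$ when $W\neq 0$, the equations $CA_3=\alpha_3A_3C$ and $CA_4=\alpha_4A_4C$ collapse to $P_ma=\alpha_3^{-1}aP_m$ and $Q_ma=\alpha_4^{-1}aQ_m$ in $\M_m(\C)$. A second application of Lemma~\ref{9}, now with $R=\C$, yields $a=bV$ with $b\in\C$ and $V\in\Pi_m$. Hence $C=b\,(W\otimes V)$, and by Lemma~\ref{2}(iii) we get $\Ad_C=\Ad_{W\otimes V}$, which belongs to $\P_n\otimes\P_m$ by Definition~\ref{6}. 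Combining the two inclusions gives the claimed equality.

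The step I expect to require the most care is the dovetailing of the two instances of Lemma~\ref{9} through the identification $\M_{nm}(\C)\cong\M_n(\M_m(\C))$: one must check that the embeddings of $P_n,Q_n$ coincide precisely with $A_1,A_2$, that the scalar $a\in R$ produced by the first application really factors the matrix as $W\otimes a$ (its entries being central enough to pass through the tensor factor), and that reducing the $A_3,A_4$ relations to relations purely in $\M_m(\C)$ is legitimate. Each of these points is conceptually routine, but the bookkeeping of the scalars $\alpha_i$ and of the noncommutative coefficient ring $R$ must be tracked carefully so that the hypotheses of Lemma~\ref{9} are met on the nose in both applications.
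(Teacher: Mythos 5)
Your proposal is correct and follows essentially the same route as the paper's proof: the trivial inclusion from commutativity, then two successive applications of Lemma~\ref{9} via the identification $\M_{nm}(\C)\cong\M_n(\M_m(\C))$ (first with $R=\M_m(\C)$ to get $C=W\otimes a$, then with $R=\C$ to get $a=bV$), and finally Lemma~\ref{2}(iii) to absorb the scalar. The bookkeeping points you flag (the natural embedding of $P_n,Q_n$ as $A_1,A_2$, centrality of the scalar entries, cancellation of the invertible factor $W$) are exactly the checks implicit in the paper's argument, so there is no gap.
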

\begin{proof}
Let $C\in\GL(nm,\C)$ be such that
 $\Ad_{C}\in\Inn(\GL(nm,\C))$ commutes with every element
of $\P_{n}\otimes \P_{m}$.  By \ref{2}(iii), $A_{1}C=\lambda CA_{1}$
and $A_{2}C=\mu CA_{2}$ for some $\lambda,\mu\in\C\setminus\{0\}$.
We can view $C$, $A_{1}$ and $A_{2}$ as elements of
$\M_{n}(\M_{m}(\C))$. By \ref{9}, we get $C=(H\otimes
I_{m})(I_{n}\otimes A)=(H\otimes A)$ for some $H\in\Pi_{n}$ and
$A\in \M_{m}(\C)$. Since $C$ is invertible, we have $A\in\GL(m,\C)$.

Further, there is $0\neq\lambda'\in\C$ such that
$$H\otimes P_{n}A=(I_{n}\otimes P_{m})(H\otimes A)=
\lambda'(H\otimes A)(I_{n}\otimes P_{m})=H\otimes \lambda' AP_{m}.
 $$
Hence $I_{n}\otimes P_{n}A=I_{n}\otimes \lambda' AP_{m}$ and thus
$P_{n}A=\lambda' AP_{m}$. Similarly, $Q_{n}A=\mu' AQ_{m}$ for some
$0\neq\mu'\in\C$. By \ref{9}, we get that $A=\alpha H'$ for some
$H'\in\P_{m}$ and $0\neq\alpha\in\C$. Finally, $C=\alpha (H\otimes
H')$ and $\Ad_{C}\in\P_{n}\otimes \P_{m}$.
\end{proof}

\section{The symmetry group $\mathcal{G}$}

This section is devoted to the definition of the group $\mathcal{G}$
and its principal properties. The proof that $\mathcal{G}$ is indeed
the symmetry group is contained in the next section 6.

We start with the definition of a monoid $\mathcal{S}$, i.e. a set
with a binary associative operation and a  neutral element. Through
this section let $n,m,a,b,d\in\N$ be fixed numbers such that
 $$d\mid
n,\ \ \ d\mid m,\ \ \ n\mid abd\ \ \ \textrm{and}\ \ \ m\mid abd.$$
In order to avoid too complicated notation we will not write out the
dependence of all defined structures on the choice of the numbers
above.

\begin{definition}\label{11}
Put
$$\mathcal{S}:=
\bigg\{\begin{mat2}{A_{11}}{aA_{12}}{bA_{21}}{A_{22}}\end{mat2}\bigg
|\ \ A_{ij}\in\M_{2}(\Z)\bigg\}\sub\M_{4}(\Z).$$
 Further, for
$A=\Big(\begin{smallmatrix}A_{11}&aA_{12}\\bA_{21}&A_{22}\end{smallmatrix}\Big)\in\mathcal{S}$
and
$B=\Big(\begin{smallmatrix}B_{11}&aB_{12}\\bB_{21}&B_{22}\end{smallmatrix}\Big)\in\mathcal{S}$
put
 $$A\equiv B\ \ \ \Leftrightarrow \ \ \ A_{11}
 \equiv_{n}B_{11}\ \ \&\ \ A_{12}\equiv_{d}B_{12}\ \ \&\ \ A_{21}
 \equiv_{d}B_{21}\ \ \&\ \ A_{22}\equiv_{m}B_{22},$$
where $\equiv_k$ is an abbreviation for element-wise `congruent
modulo $k$'.
\end{definition}

\begin{lemma}\label{12}
$\mathcal{S}$ is a submonoid of $(\M_{4}(\Z),\cdot\ )$ with unit
$I_4$ and $\equiv$ is a congruence of $(\mathcal{S},\cdot)$.
\end{lemma}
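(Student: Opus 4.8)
The plan is to treat the two assertions separately. For the submonoid claim I would first note that $I_4\in\mathcal{S}$ (take the diagonal blocks to be $I_2$ and the off-diagonal blocks to be $0$), and that associativity is inherited from $\M_4(\Z)$, so only closure under multiplication needs checking. This is a direct block computation: for $A,B\in\mathcal{S}$ with block-entries $A_{ij},B_{ij}\in\M_2(\Z)$, the product $AB$ has top-left block $A_{11}B_{11}+ab\,A_{12}B_{21}$, top-right block $a(A_{11}B_{12}+A_{12}B_{22})$, bottom-left block $b(A_{21}B_{11}+A_{22}B_{21})$, and bottom-right block $ab\,A_{21}B_{12}+A_{22}B_{22}$. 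Each block has the required shape — the two off-diagonal blocks carry an explicit factor $a$ resp. $b$, and all entries are integers — so $AB\in\mathcal{S}$.

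For the congruence claim I would first record that $\equiv$ is an equivalence relation, being the conjunction of four element-wise congruences modulo fixed integers. To show compatibility with multiplication it suffices, by transitivity, to prove that $A\equiv A'$ implies both $AB\equiv A'B$ and $BA\equiv BA'$ for every $B$; then $A\equiv A'$ and $B\equiv B'$ yield $AB\equiv A'B\equiv A'B'$. I would verify the first implication by subtracting the block-entries of $A'B$ from those of $AB$ and checking each of the four resulting blocks against its own modulus, invoking the four divisibility hypotheses $d\mid n$, $d\mid m$, $n\mid abd$, $m\mid abd$ one at a time; the second implication follows from the analogous computation with the roles of $A$ and $B$ interchanged.

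The heart of the argument — and where the hypotheses are genuinely used — is this block-by-block matching, so let me indicate how each modulus is met. In the $(1,1)$ block, taken modulo $n$, the diagonal contribution $(A_{11}-A'_{11})B_{11}$ vanishes since $A_{11}\equiv_n A'_{11}$, while the cross term $ab(A_{12}-A'_{12})B_{21}$ has all entries divisible by $abd$ (because $A_{12}\equiv_d A'_{12}$) and hence vanishes modulo $n$ thanks to $n\mid abd$; the $(2,2)$ block modulo $m$ is symmetric and uses $m\mid abd$. In the $(1,2)$ block, taken modulo $d$, the contribution from $A_{12}-A'_{12}$ already vanishes modulo $d$, and the contribution from $A_{11}-A'_{11}$, which a priori vanishes only modulo $n$, still vanishes modulo $d$ because $d\mid n$; the $(2,1)$ block is analogous via $d\mid m$. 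The only real obstacle is bookkeeping: one must ensure that in each block the relevant modulus divides the coefficient produced by the off-diagonal factors $a,b$ together with the weaker modulus $d$ coming from the off-diagonal blocks, and the four divisibility conditions are exactly what make these four matchings succeed.
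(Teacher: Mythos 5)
Your proposal is correct and follows essentially the same route as the paper: the paper's proof consists precisely of the block multiplication identity you display, leaving the closure and congruence checks as ``easily follows.'' Your block-by-block modular verification (using $n\mid abd$, $m\mid abd$ for the diagonal blocks and $d\mid n$, $d\mid m$ for the off-diagonal ones) just makes explicit the details the paper omits.
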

\begin{proof}
The assertion follows easily from the following equality
 $$\Big(\begin{smat2}{A_{11}}{aA_{12}}{bA_{21}}{A_{22}}\end{smat2}\Big)
 \Big(\begin{smat2}{B_{11}}{aB_{12}}{bB_{21}}{B_{22}}\end{smat2}\Big)=
\Big(\begin{smat2}{A_{11}B_{11}+abA_{12}B_{21}}{a(A_{11}B_{12}+
A_{12}B_{22})}{b(A_{21}B_{11}+A_{22}B_{21})}{abA_{21}B_{12}+
A_{22}B_{22}}\end{smat2}\Big).$$
\end{proof}

\begin{definition}\label{17}
 For
 $A=\Big(\begin{smallmatrix}A_{11}&aA_{12}\\bA_{21}&A_{22}\end{smallmatrix}\Big)
 \in\mathcal{S}$ put $A^{\ast}=
 \Big(\begin{smat2}{A^{T}_{11}}{aA^{T}_{21}}{bA^{T}_{12}}{A^{T}_{22}}\end{smat2}\Big)\in\mathcal{S}$
and denote $[A]$ the class of congruence $\equiv$ containing $A$.
 Further denote
 $$J=\Big(\begin{smat2}{J_{2}}{0}{0}{J_{2}}\end{smat2}\Big)\in\mathcal{S}\
\ \ \textrm{with} \ \
 J_{2}=\Big(\begin{smat2}{0}{1}{-1}{0}\end{smat2}\Big)\in\M_{2}(\Z)
\ \ \textrm{and} \ \ j=[J]\in\mathcal{S}/_{\equiv}$$.
\end{definition}

The following lemma is easy to verify.
\begin{lemma}\label{15}
Let $A,B\in\mathcal{S}$. Then:
\begin{enumerate}
\item[(i)] $(A^{\ast})^{\ast}=A$.
\item[(ii)] $(AB)^{\ast}=B^{\ast}A^{\ast}$.
\item[(iii)] The map $[A]\mapsto [A^{\ast}]$ is a well defined operation on $\mathcal{S}/_{\equiv}$.
\end{enumerate}
\end{lemma}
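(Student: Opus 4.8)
The plan is to treat each of the three parts as a direct block computation, reading off the inner $2\times 2$ blocks and tracking how the star operation acts on them. The basic observation is that for $A=\bigl(\begin{smallmatrix}A_{11}&aA_{12}\\bA_{21}&A_{22}\end{smallmatrix}\bigr)$ the map $A\mapsto A^{\ast}$ does two things: it transposes each inner block and it interchanges the two off-diagonal inner blocks $A_{12}$ and $A_{21}$, while leaving the scalar factors $a$ and $b$ in their original positions. In particular $A^{\ast}$ again lies in $\mathcal{S}$, and its inner blocks in positions $(1,1),(1,2),(2,1),(2,2)$ are $A_{11}^{T},A_{21}^{T},A_{12}^{T},A_{22}^{T}$ respectively.

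For (i) I would simply apply the star operation twice. Applying it to $A^{\ast}$ transposes each inner block a second time and swaps the off-diagonal inner blocks back, so the $(1,1)$ block returns to $(A_{11}^{T})^{T}=A_{11}$, the inner block in position $(1,2)$ returns to $(A_{12}^{T})^{T}=A_{12}$, and likewise for the other two positions. Hence $(A^{\ast})^{\ast}=A$.

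For (ii) the key input is the product formula recorded in Lemma \ref{12}. I would first write out the four inner blocks of $AB$ from that formula, then apply the star recipe (transpose each block, interchange the two off-diagonal inner blocks), expanding each transpose by $(XY)^{T}=Y^{T}X^{T}$. Separately I would compute $B^{\ast}A^{\ast}$ by feeding the inner blocks of $B^{\ast}$ and $A^{\ast}$ into the same product formula. Comparing the two results block by block, one sees that all four inner blocks agree; for instance the $(1,1)$ inner block of $(AB)^{\ast}$ equals $B_{11}^{T}A_{11}^{T}+ab\,B_{21}^{T}A_{12}^{T}$, which is exactly the $(1,1)$ inner block produced by the product $B^{\ast}A^{\ast}$. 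This is a routine but slightly lengthy verification, and I expect this bookkeeping --- keeping the block positions, the transposes, and the factor $ab$ straight --- to be the only place where genuine care is needed.

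For (iii) I would show that $\equiv$ is respected, i.e.\ that $A\equiv B$ implies $A^{\ast}\equiv B^{\ast}$; this is precisely what is required for $[A]\mapsto[A^{\ast}]$ to descend to a well-defined operation on $\mathcal{S}/_{\equiv}$. The point is that element-wise congruence modulo $k$ is preserved under transposition, so from $A_{11}\equiv_{n}B_{11}$, $A_{12}\equiv_{d}B_{12}$, $A_{21}\equiv_{d}B_{21}$, $A_{22}\equiv_{m}B_{22}$ one obtains $A_{11}^{T}\equiv_{n}B_{11}^{T}$, $A_{21}^{T}\equiv_{d}B_{21}^{T}$, $A_{12}^{T}\equiv_{d}B_{12}^{T}$, $A_{22}^{T}\equiv_{m}B_{22}^{T}$. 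These are exactly the four conditions defining $A^{\ast}\equiv B^{\ast}$, the crucial compatibility being that the two off-diagonal blocks share the same modulus $d$, so interchanging them does not disturb the congruence. Hence the map is well defined.
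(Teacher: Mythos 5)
Your proof is correct: the block bookkeeping in (ii) matches the product formula of Lemma \ref{12} (e.g.\ the $(1,1)$ inner block $B_{11}^{T}A_{11}^{T}+ab\,B_{21}^{T}A_{12}^{T}$ is right), and in (iii) you correctly identify the one point of substance, namely that both off-diagonal inner blocks carry the same modulus $d$, so the swap built into $\ast$ preserves $\equiv$. The paper states this lemma without proof as ``easy to verify,'' and your direct block-by-block computation is precisely the intended verification.
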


Now we are prepared to define group $\mathcal{G}$ and state its
simple properties.
\begin{definition}\label{19}
Define
$$\mathcal{G}_{n,m,d,a,b}:=\set{x\in \mathcal{S}/_{\equiv}}{\ \ x^{\ast}jx=j}.$$
Since the entries $n,m,d,a,b$ are fixed we will use a brief notation
and write $\mathcal{G}$ instead of $\mathcal{G}_{n,m,d,a,b}$.
\end{definition}
The following properties are easy to verify.
\begin{lemma}\label{20}
\begin{enumerate}
\item[(i)] $j^{\ast}j=jj^{\ast}=1$ and $j\in \mathcal{G}$.
\item[(ii)] $-1\in\mathcal{G}$, $(-1)j=j^{\ast}$ and $(-1)g=g(-1)$
for every $g\in\mathcal{S}/_{\equiv}$.
\end{enumerate}
\end{lemma}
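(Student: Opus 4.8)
The plan is to reduce everything to two elementary matrix identities holding \emph{exactly} over $\Z$, and only afterwards pass to congruence classes using Lemma~\ref{15}. First I would compute $J^{\ast}$ directly from Definition~\ref{17}. Since the off-diagonal blocks of $J$ vanish, $J^{\ast}$ is obtained simply by transposing the two diagonal blocks $J_2$; because $J_2^{T}=-J_2$, this gives $J^{\ast}=-J$ as integer matrices. The second identity I need is $J^2=-I_4$, which follows from $J_2^2=-I_2$. Both equalities hold in $\M_4(\Z)$, so there is no subtlety about the choice of representatives.

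For part (i), combining the two identities yields $J^{\ast}J=(-J)J=-J^2=I_4$ and likewise $JJ^{\ast}=I_4$. Passing to classes through the well-defined $\ast$-operation of Lemma~\ref{15}(iii) and the product of Lemma~\ref{12}, this reads $j^{\ast}j=jj^{\ast}=[I_4]=1$. Membership $j\in\mathcal{G}$ is then immediate from Definition~\ref{19}: evaluating the defining expression at $x=j$ gives $j^{\ast}jj=(j^{\ast}j)j=1\cdot j=j$, which is exactly $x^{\ast}jx=j$.

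For part (ii), the key observation is that $-I_4$ lies in $\mathcal{S}$ (its off-diagonal blocks are zero) and is central in $\M_4(\Z)$; hence its class $-1=[-I_4]$ is central in the monoid $\mathcal{S}/_{\equiv}$, which gives $(-1)g=g(-1)$ for every $g$. Applying Definition~\ref{17} blockwise shows $(-I_4)^{\ast}=-I_4$, so $(-1)^{\ast}=-1$. Then, using $J^{\ast}=-J$ from the first step, $(-1)j=[(-I_4)J]=[-J]=[J^{\ast}]=j^{\ast}$. Finally $-1\in\mathcal{G}$ because $(-1)^{\ast}j(-1)=(-1)j(-1)=j\,(-1)(-1)=j$, where I use centrality together with $(-1)^2=1$.

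I do not expect any real obstacle here; the statement is a sequence of direct verifications. The only point requiring genuine care is the bookkeeping of which identities are exact over $\Z$ (namely $J^{\ast}=-J$, $J^2=-I_4$, the centrality of $-I_4$, and $(-I_4)^2=I_4$) versus which are equalities of classes, and to invoke Lemma~\ref{15} so that $\ast$ and the product legitimately descend to $\mathcal{S}/_{\equiv}$, making the class-level computations valid.
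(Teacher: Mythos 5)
Your proof is correct: the paper itself gives no argument for this lemma (it is introduced with ``easy to verify''), and your verification via the exact integer identities $J^{\ast}=-J$, $J^{2}=-I_{4}$, $(-I_{4})^{\ast}=-I_{4}$ and the centrality of $-I_{4}$, descending to classes through Lemmas~\ref{12} and~\ref{15}(iii), is precisely the routine computation the authors intend. Your care in separating equalities in $\M_{4}(\Z)$ from equalities of classes in $\mathcal{S}/_{\equiv}$ is exactly the right bookkeeping, and nothing is missing.
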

\begin{lemma}\label{20.1}
 Let $M$ be a finite monoid.
 Let $a\in M$ have a one-sided inverse (left or right).
 Then $a$ is an invertible element.
\end{lemma}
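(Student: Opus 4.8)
The plan is to reduce to a single case by symmetry and then to exploit finiteness through the elementary fact that an injective self-map of a finite set is automatically surjective. First I would treat the case where $a$ has a \emph{left} inverse, say $b \in M$ with $ba = 1$; the case of a right inverse is entirely symmetric (reverse all products, i.e. pass to the opposite monoid), so it suffices to settle one side.

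Starting from $ba = 1$, I would first observe that $a$ is left-cancellable in $M$: if $ax = ay$ for some $x,y \in M$, then multiplying on the left by $b$ gives $x = (ba)x = b(ax) = b(ay) = (ba)y = y$. Consequently the left-translation map $L_a \colon M \to M$ defined by $L_a(x) = ax$ is injective. This is the conceptual heart of the argument, since it converts the one-sided algebraic hypothesis into an injectivity statement.

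The next (and key) step is to bring in finiteness. Because $M$ is a finite set, the injective map $L_a$ is necessarily surjective; in particular $1$ lies in its image, so there exists $c \in M$ with $ac = 1$. Thus $c$ is a \emph{right} inverse of $a$. Finally I would reconcile the two one-sided inverses by the standard monoid computation
$$b = b \cdot 1 = b(ac) = (ba)c = 1 \cdot c = c,$$
so that $b = c$ is a genuine two-sided inverse and $a$ is invertible.

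I do not expect a serious obstacle here: the only point requiring care is making sure both the left-inverse and right-inverse hypotheses are covered, which the opposite-monoid symmetry handles cleanly, and invoking ``injective $\Rightarrow$ surjective'' only for the finite set $M$ (this is exactly where the hypothesis that $M$ is finite is indispensable — over an infinite monoid the statement can fail).
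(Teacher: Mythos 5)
Your proof is correct, but it takes a genuinely different route from the paper's. You convert the left inverse $ba=1$ into left-cancellability of $a$, deduce that the translation map $L_a(x)=ax$ is injective, and then use the fact that an injective self-map of the finite set $M$ is surjective to produce a right inverse $c$ with $ac=1$, finally identifying $b=c$ via $b=b(ac)=(ba)c=c$; the opposite-monoid remark covers the other case. The paper instead applies pigeonhole to the powers of $a$: starting from a right inverse $b$ (so $ab=1$), finiteness yields $m>n$ with $a^{m}=a^{n}$, whence $a^{m-n}=a^{m}b^{n}=a^{n}b^{n}=1$, and $a^{m-n-1}$ is then an explicit two-sided inverse. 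Both arguments exploit finiteness correctly but differently: yours through the cardinality argument on translation maps (the standard, arguably more conceptual textbook mechanism), the paper's through eventual periodicity of the power sequence, which has the small bonus of exhibiting the inverse explicitly as a positive power of $a$ --- in particular it shows $a^{k}=1$ for some $k\geq 1$, so the inverse lies in the submonoid generated by $a$ itself. One minor remark: your final reconciliation $b=c$ is slightly more than needed, since possessing both a left and a right inverse already forces invertibility (and equality of the two inverses) by that very computation, but this is harmless and the proof is complete as written.
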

\begin{proof}
Let $a\in M$. Let $b\in M$ be such that $ab=1$.
 Since $M$ is finite, there are $m,n\in\N$, $m>n$ such that $a^{m}=a^{n}$.
 Hence $a^{m-n}=a^{m}b^{n}=a^{n}b^{n}=1$.
 Hence $a(a^{m-n-1})=(a^{m-n-1})a=1$ and $a$ is an invertible element.
The other case being symmetrical.
\end{proof}

\begin{proposition}\label{20.2}
$\mathcal{G}$ is a finite subgroup of the monoid
$\mathcal{S}/_{\equiv}$.
\end{proposition}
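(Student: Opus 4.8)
The plan is to establish the three things needed for a subgroup of a monoid: that $\mathcal{G}$ is finite, that it contains the identity and is closed under multiplication, and finally that every element of $\mathcal{G}$ possesses a two-sided inverse lying again in $\mathcal{G}$. The last point is the crux, and it is precisely where the finite-monoid trick of Lemma \ref{20.1} will be used.

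First I would record finiteness and the submonoid properties, both of which are routine. A class in $\mathcal{S}/_{\equiv}$ is determined by the four blocks $A_{11},A_{12},A_{21},A_{22}$ taken modulo $n,d,d,m$ respectively, so $\mathcal{S}/_{\equiv}$ has at most $n^4d^8m^4$ elements; in particular it is a finite monoid, whence $\mathcal{G}\sub\mathcal{S}/_{\equiv}$ is automatically finite. For the identity, the class $1=[I_4]$ satisfies $1^{\ast}j1=j$ because $I_4^{\ast}=I_4$, so $1\in\mathcal{G}$. For closure I would use that $\ast$ descends to a well-defined anti-automorphism of $\mathcal{S}/_{\equiv}$ (Lemma \ref{15}(ii),(iii)): given $x,y\in\mathcal{G}$, one computes $(xy)^{\ast}j(xy)=y^{\ast}(x^{\ast}jx)y=y^{\ast}jy=j$, so $xy\in\mathcal{G}$. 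Thus $\mathcal{G}$ is a submonoid of $\mathcal{S}/_{\equiv}$.

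The substantial step is inversion. Fix $x\in\mathcal{G}$, so that $x^{\ast}jx=j$. By Lemma \ref{20}(i) the element $j$ is invertible with $j^{-1}=j^{\ast}$; multiplying $x^{\ast}jx=j$ on the right by $j^{\ast}$ gives $x^{\ast}(jxj^{\ast})=1$, so $x^{\ast}$ has a right inverse. Because $\mathcal{S}/_{\equiv}$ is finite, Lemma \ref{20.1} promotes this one-sided inverse to genuine invertibility of $x^{\ast}$, and applying $\ast$ together with $(x^{\ast})^{\ast}=x$ (Lemma \ref{15}(i)) shows $x$ itself is invertible. It then remains to verify $x^{-1}\in\mathcal{G}$: multiplying $x^{\ast}jx=j$ on the left by $(x^{-1})^{\ast}=(x^{\ast})^{-1}$ and on the right by $x^{-1}$, and using $(x^{-1})^{\ast}x^{\ast}=(xx^{-1})^{\ast}=1$, yields $(x^{-1})^{\ast}j(x^{-1})=j$, which is exactly the defining condition for membership in $\mathcal{G}$.

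I expect the only delicate bookkeeping to be the repeated use of the identity $(x^{-1})^{\ast}=(x^{\ast})^{-1}$ and of the anti-automorphism property of $\ast$; no block computations are needed beyond those already recorded. The essential idea is that the defining relation $x^{\ast}jx=j$, combined with the invertibility of $j$, manufactures a one-sided inverse of $x^{\ast}$, which Lemma \ref{20.1} upgrades to a two-sided inverse precisely because $\mathcal{S}/_{\equiv}$ is finite — so finiteness is not merely a side remark but the engine driving the inversion step.
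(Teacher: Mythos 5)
Your proof is correct and runs on the same engine as the paper's: both manufacture a one-sided inverse of $x$ (or $x^{\ast}$) from the defining relation $x^{\ast}jx=j$ together with $jj^{\ast}=j^{\ast}j=1$, and then invoke the finite-monoid Lemma \ref{20.1} to upgrade it to a two-sided inverse. The only divergence is the final membership check: you verify $(x^{-1})^{\ast}jx^{-1}=j$ directly by conjugating the defining relation and using $(x^{-1})^{\ast}=(x^{\ast})^{-1}$, whereas the paper instead proves $x^{\ast}\in\mathcal{G}$ (via $-1\in\mathcal{G}$, Lemma \ref{20}(ii)) and concludes by closure from $x^{-1}=j^{\ast}x^{\ast}j$ --- a slightly longer route whose payoff is that the explicit inverse formula and the equivalences of Corollary \ref{20.3} fall out as immediate byproducts, which your version would need a few extra lines to recover.
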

\begin{proof}
 Clearly, $\mathcal{S}/_{\equiv}$ is finite. Let $x,y\in\mathcal{G}$.
 Then by \ref{15}(ii) $(xy)^{\ast}j(xy)=y^{\ast}(x^{\ast}jx)y=y^{\ast}jy=j$.
 Hence $xy\in\mathcal{G}$.

 For $x\in\mathcal{G}$ we have $x^{\ast}jx=j$, hence $(j^{\ast}x^{\ast}j)x=1$.
 By \ref{20.1}, $x^{-1}=j^{\ast}x^{\ast}j$.
 Thus $1=xx^{-1}=xj^{\ast}x^{\ast}j$ and $j^{\ast}=xj^{\ast}x^{\ast}$.
 By \ref{20}(ii), $-1\in\mathcal{G}$ and $(-1)j^{\ast}=j$.
 Hence $j=xjx^{\ast}=(x^{\ast})^{\ast}jx^{\ast}$, since $(x^{\ast})^{\ast}=x$.
 It follows that $x^{\ast}\in\mathcal{G}$ and thus
 $x^{-1}=j^{\ast}x^{\ast}j\in\mathcal{G}$. Hence $\mathcal{G}$ is a group.
\end{proof}

As an immediate consequence of the proof of \ref{20.2} we get the
following
\begin{corollary}\label{20.3}
Let $x\in\mathcal{S}/_{\equiv}$.
\begin{enumerate}
\item[(i)] If $x\in\mathcal{G}$ then $x^{-1}=j^{\ast}x^{\ast}j$.
\item[(ii)] $x\in\mathcal{G}$ if and only if $x^{\ast}\in\mathcal{G}$.
\item[(iii)] $x\in\mathcal{G}$ if and only if $xjx^{\ast}=j$.
\end{enumerate}
\end{corollary}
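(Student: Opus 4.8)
The plan is to harvest all three assertions directly from the computation already performed in the proof of Proposition \ref{20.2}, since the corollary is advertised as an immediate consequence of that proof; no new machinery is needed beyond the identities supplied by Lemmas \ref{20} and \ref{15} and the finiteness argument of Lemma \ref{20.1}.

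For part (i) I would simply repeat the first paragraph of the proof of \ref{20.2}. If $x\in\mathcal{G}$, then $x^{\ast}jx=j$ by definition; multiplying on the left by $j^{\ast}$ and using $j^{\ast}j=1$ from \ref{20}(i) gives $(j^{\ast}x^{\ast}j)x=1$, so $j^{\ast}x^{\ast}j$ is a left inverse of $x$ in the finite monoid $\mathcal{S}/_{\equiv}$. By \ref{20.1} this one-sided inverse is automatically the two-sided inverse, whence $x^{-1}=j^{\ast}x^{\ast}j$, which is exactly (i).

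For the forward direction of part (ii) I would continue as in \ref{20.2}: from $xx^{-1}=1$ we obtain $xj^{\ast}x^{\ast}j=1$, hence $j^{\ast}=xj^{\ast}x^{\ast}$; multiplying by the central element $-1$ and using $(-1)j^{\ast}=j$ from \ref{20}(ii) yields $j=xjx^{\ast}$. Since $(x^{\ast})^{\ast}=x$ by \ref{15}(i), this reads $(x^{\ast})^{\ast}j(x^{\ast})=j$, i.e. $x^{\ast}\in\mathcal{G}$ by the defining condition of $\mathcal{G}$. The converse implication then comes for free by applying what we just proved to $x^{\ast}$ in place of $x$ and invoking $(x^{\ast})^{\ast}=x$ once more, giving the equivalence (ii). Finally, part (iii) is a restatement: the defining condition of $\mathcal{G}$ applied to $x^{\ast}$ says $x^{\ast}\in\mathcal{G}\Leftrightarrow (x^{\ast})^{\ast}jx^{\ast}=j\Leftrightarrow xjx^{\ast}=j$, and combining this with (ii) produces $x\in\mathcal{G}\Leftrightarrow xjx^{\ast}=j$.

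I expect no genuine obstacle here, only careful bookkeeping between $j$ and $j^{\ast}$ and consistent use of the centrality of $-1$; the single conceptual point that makes the chain work is Lemma \ref{20.1}, which lets me upgrade the left inverse $j^{\ast}x^{\ast}j$ to a genuine inverse without separately checking a right inverse. Everything else is associative manipulation using the already-established relations $j^{\ast}j=1$, $(-1)j^{\ast}=j$, and $(x^{\ast})^{\ast}=x$.
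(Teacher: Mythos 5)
Your proof is correct and takes essentially the same route as the paper: Corollary \ref{20.3} is harvested, exactly as you do, from the chain of identities in the proof of Proposition \ref{20.2} --- from $x^{\ast}jx=j$ deduce $(j^{\ast}x^{\ast}j)x=1$ and invoke Lemma \ref{20.1} to get $x^{-1}=j^{\ast}x^{\ast}j$, then pass from $j^{\ast}=xj^{\ast}x^{\ast}$ to $j=xjx^{\ast}$ via the central element $-1$ and $(x^{\ast})^{\ast}=x$, yielding (ii) and (iii). Your bookkeeping with $j$, $j^{\ast}$ and the centrality of $-1$ matches the paper's argument step for step, so there is nothing to add.
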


Lengthy derivations of more detailed properties of $\mathcal{G}$ are
postponed to the Appendix. First there is Lemma \ref{23} giving
criterion for a class $[A]$ to belong to $\mathcal{G}$ in terms of
properties of the elements of $A\in\M_4(\Z)$. Second, in Theorem
\ref{30} is proved that $\mathcal{G}$ is a finite group generated by
classes of special elements of $\M_4(\Z)$, see
\begin{quote}\textit{
$\mathcal{G}$ is a group generated by $\{r(1)\}\cup\mathcal{R}$,
 where $r(k)$ and $\mathcal{R}$ are defined in \ref{24}.}
\end{quote}

\section{The normalizer of $\P_{n}\otimes \P_{m}$}

In this central part of the paper the normalizer is completely
described and the main Theorem \ref{37} is proved. It contains our
principal result that the symmetry group, being the quotient group
of the normalizer, is indeed isomorphic to $\mathcal{G}$.

Through this section $n,m\in\N$ are again fixed and we set
$d=\gcd(n,m),\ a=n/d \ {\rm and}\  b=m/d$. With this choice of $d$,
$a$ and $b$ the conditions of the previous section are clearly
fulfilled,
 $$\lcm(n,m)=mn/\gcd(n,m)=ma=nb=abd.$$

\begin{definition}
Define
 $$\mathcal{N}(\P_{n}\otimes
\P_{m}):=N_{\Inn(\GL(nm,\C))}(\P_{n}\otimes \P_{m}),$$
 the normalizer of $\P_{n}\otimes \P_{m}$ in $\Inn(\GL(nm,\C))$.
 Further define
$$\mathcal{N}(\P_{n}):=N_{\Inn(\GL(n,\C))}(\P_{n}),$$
 the normalizer of $\P_{n}$ in $\Inn(\GL(n,\C))$,
 and $$\mathcal{N}(\P_{n})\otimes
\mathcal{N}(\P_{m}):=\set{\Ad_{A\otimes B}}{
A\in\mathcal{N}(\P_{n}), B\in\mathcal{N}(\P_{m})}\sub
\Inn(\GL(nm,\C)).$$
\end{definition}
Clearly, $\mathcal{N}(\P_{n})\otimes \mathcal{N}(\P_{m})\subseteq
\mathcal{N}(\P_{n}\otimes \P_{m})$.
\begin{lemma}\label{31}
 For every $\alpha\in \Aut(\P_{n}\otimes \P_{m})$ there is a unique
 $[A]\in\mathcal{S}/_{\equiv}$ such that
 $$\alpha(e_{j})=\prod\limits^{4}_{i=1}e_{i}^{A_{ij}}\ \ \textit{for each}\ \ j=1,\dots,4.$$
 The map $$\Phi:\Aut(\P_{n}\otimes \P_{m})\to \mathcal{S}/_{\equiv},
 \quad \Phi(\alpha):=[A]$$ is a monoid  monomorphism.
\end{lemma}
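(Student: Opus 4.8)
The plan is to exploit that $\P_n\otimes\P_m$ is an (internal) direct product of the cyclic groups $\gen{e_i}$ by Lemma \ref{7}, so that any $\alpha\in\Aut(\P_n\otimes\P_m)$ is completely determined by the four images $\alpha(e_1),\dots,\alpha(e_4)$, each of which has a \emph{unique} expression $\alpha(e_j)=\prod_{i=1}^{4}e_i^{A_{ij}}$. First I would pin down the orders of the generators: since $P_n^k\otimes I_m$ is scalar only when $n\mid k$ (and likewise for the other three of the $A_i$ from Definition \ref{6}), Lemma \ref{2}(iii) gives that $e_1,e_2$ have order $n$ and $e_3,e_4$ have order $m$. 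Consequently the exponents $A_{1j},A_{2j}$ are well defined only modulo $n$ and $A_{3j},A_{4j}$ only modulo $m$; this is the common source of both the nonuniqueness of the integer matrix $A$ and of the equivalence $\equiv$.

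Next I would check that such an $A$ must lie in $\mathcal{S}$. Because $\alpha$ is a homomorphism it preserves the relation $e_j^{\mathrm{ord}(e_j)}=1$, so $\alpha(e_j)^{\mathrm{ord}(e_j)}=1$. For $j\in\{1,2\}$ this forces $m\mid nA_{3j}$ and $m\mid nA_{4j}$; writing $n=ad$, $m=bd$ with $\gcd(a,b)=1$, this is exactly $b\mid A_{3j}$ and $b\mid A_{4j}$, i.e.\ the lower-left $2\times2$ block is $b$ times an integer matrix. Symmetrically, $j\in\{3,4\}$ yields $a\mid A_{1j}$, $a\mid A_{2j}$, so the upper-right block is $a$ times an integer matrix; this is precisely membership in $\mathcal{S}$ (Definition \ref{11}). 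The same arithmetic shows that the ambiguity in the exponents matches $\equiv$ exactly: an off-diagonal entry $a x$ is defined modulo $n=ad$, i.e.\ $x$ is defined modulo $d$, which is the congruence $\equiv_d$ on the blocks $A_{12},A_{21}$, while the diagonal blocks $A_{11},A_{22}$ keep their moduli $n$ and $m$. Hence $[A]\in\mathcal{S}/_{\equiv}$ is well defined and unique, giving a well-defined map $\Phi$.

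I would then verify the homomorphism property by the standard transport-of-composition computation: if $\alpha(e_j)=\prod_i e_i^{A_{ij}}$ and $\beta(e_k)=\prod_j e_j^{B_{jk}}$, then applying $\alpha$ to the second expression and using commutativity of $\P_n\otimes\P_m$ gives $(\alpha\beta)(e_k)=\prod_i e_i^{\sum_j A_{ij}B_{jk}}$, so the exponent matrix of $\alpha\beta$ is the product $AB$. Since $\equiv$ is a congruence of $(\mathcal{S},\cdot)$ by Lemma \ref{12}, this descends to $\Phi(\alpha\beta)=[AB]=[A][B]=\Phi(\alpha)\Phi(\beta)$; the identity automorphism clearly maps to $[I_4]$, the unit of $\mathcal{S}/_{\equiv}$, so $\Phi$ is a monoid homomorphism. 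Finally, injectivity is immediate from the uniqueness already established: if $\Phi(\alpha)=\Phi(\beta)$ then $A\equiv B$, which by the second paragraph means $\alpha(e_j)=\beta(e_j)$ for every $j$, and since the $e_j$ generate $\P_n\otimes\P_m$ we conclude $\alpha=\beta$.

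The main obstacle is the bookkeeping of the second paragraph: correctly matching the divisibility pattern defining $\mathcal{S}$ and the mixed moduli appearing in $\equiv$ to the orders of the $e_i$, which rests entirely on the elementary fact that $m\mid nx\iff b\mid x$ when $\gcd(a,b)=1$. Everything else is formal.
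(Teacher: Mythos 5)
Your proposal is correct and follows essentially the same route as the paper's proof: using the direct product structure of Lemma \ref{7} and the orders of the $e_i$ to force $A\in\mathcal{S}$ with uniqueness exactly modulo $\equiv$, then the standard exponent-matrix computation for the homomorphism property and injectivity from the generators. You actually spell out a step the paper leaves implicit --- why the ambiguity in the exponents (e.g.\ $ax$ defined modulo $n=ad$, hence $x$ modulo $d$) matches the mixed moduli in Definition \ref{11} --- which is a welcome addition, not a deviation.
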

\begin{proof}
For every $\alpha\in \Aut(\P_{n}\otimes \P_{m})$ there are
$a_{i,j}\in\Z$ such that
$\alpha(e_{j})=\prod\limits^{4}_{i=1}e_{i}^{a_{ij}}$. Put
$A=(a_{ij})$. The order of both $e_{1}$ and $e_{2}$ is equal to $n$
and the order of both $e_{3}$ and $e_{4}$ is equal to $m$. Hence we
have for $j=1,2$ that
$1=\alpha(e_{j}^{n})=e_{1}^{na_{1j}}e_{2}^{na_{2j}}e_{3}^{na_{3j}}e_{4}^{na_{4j}}$.
Thus $na_{3j}\equiv_{m}0\equiv_{m}na_{4j}$ and
$a_{3j}\equiv_{b}0\equiv_{b}a_{4j}$ for $j=1,2$. Similarly,
$1=\alpha(e_{j}^{m})=e_{1}^{ma_{1j}}e_{2}^{ma_{2j}}e_{3}^{ma_{3j}}e_{4}^{ma_{4j}}$
and $a_{1j}\equiv_{a}0\equiv_{a}a_{2j}$ for $j=3,4$. Hence
$A\in\mathcal{S}$ and using \ref{7} we see that $A$ is unique up to
$\equiv$.

Let $\alpha,\beta\in\Aut(\P_{n}\otimes \P_{m})$. Clearly,
$\Phi(\alpha)=\Phi(\beta)$ implies $\alpha(e_{i})=\beta(e_{i})$ for
every $i$. Hence $\alpha=\beta$. Let $\Phi(\alpha)=[(a_{ij})]$
and $\Phi(\beta)=[(b_{ij})]$. Then
$\alpha(\beta(e_{j}))=\alpha(\prod^{4}_{k=1}e_{k}^{b_{kj}})=\prod^{4}_{k=1}\alpha(e_{k})^{b_{kj}}=
\prod^{4}_{k=1}(\prod^{4}_{i=1}e_{i}^{a_{ik}})^{b_{kj}}=\prod^{4}_{i=1}e_{i}^{c_{ij}}$,
where $c_{ij}=\sum^{4}_{k=1}a_{ik}b_{kj}$. Hence
$\Phi(\alpha\beta)=\Phi(\alpha)\Phi(\beta)$.
\end{proof}

\begin{lemma}\label{32}
Let $$
 \Psi:\mathcal{N}(\P_{n}\otimes \P_{m})\to \Aut(\P_{n}\otimes \P_{m})$$
 $$\Psi(\Ad_{M})(\Ad_{X}):=\Ad_{M}\Ad_{X}\Ad_{M}^{-1}$$
 for every
 $\Ad_{M}\in\mathcal{N}(\P_{n}\otimes \P_{m})$ and
 $\Ad_{X}\in\P_{n}\otimes \P_{m}$.
 Then $\Psi$ is a group homomorphism and $\ker(\Psi)=\P_{n}\otimes \P_{m}$.
\end{lemma}
\begin{proof}
By \ref{10}, $\ker(\Psi)=C_{\Inn(\GL(nm,\C))}(\P_{n}\otimes
\P_{m})=\P_{n}\otimes \P_{m}$.
\end{proof}

\begin{definition}\label{32.1}
Put
$$\lambda_{ij}=e^{w_{ij}2\pi\mathrm{i}/\lcm(n,m)}$$
for $i,j=1,\dots,4$, where $w_{ij}$ are the entries of the matrix
$$W=\left(\begin{array}{cccc} 0&b&0&0\\ -b&0&0&0\\ 0&0&0&a\\ 0&0&-a&0 \end{array}\right).$$
\end{definition}

\begin{lemma}\label{33}
$A_{i}^{k}A_{j}^{l}=\lambda_{ij}^{kl}A_{j}^{l}A_{i}^{k}$ for every $i,j=1,\dots,4$ and $k,l\in\Z$.
\end{lemma}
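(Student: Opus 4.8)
The plan is to reduce the statement to the single-step relations $A_iA_j=\lambda_{ij}A_jA_i$ and then bootstrap to arbitrary integer powers by an elementary scalar-commutation argument. First I would record the purely formal fact: if $X,Y$ are invertible matrices and $c\in\C\setminus\{0\}$ is a scalar satisfying $XY=cYX$, then $X^kY^l=c^{kl}Y^lX^k$ for all $k,l\in\Z$. This follows by first proving $XY^l=c^lY^lX$ (moving $X$ past each of the $l$ factors $Y$, each move contributing one factor $c$), then iterating in the first variable to obtain $X^kY^l=c^{kl}Y^lX^k$; the extension to negative exponents comes from multiplying the base relation on both sides by $X^{-1}$, which yields $X^{-1}Y=c^{-1}YX^{-1}$, and symmetrically for $Y^{-1}$. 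Since $c$ is a central scalar, all of this is independent of the particular matrices involved.

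Thus it suffices to verify $A_iA_j=\lambda_{ij}A_jA_i$ for each pair $i,j$, which I would split into three cases. For $i=j$ the relation is trivial and $\lambda_{ii}=e^0=1$ because $w_{ii}=0$. For the ``same-factor'' pairs $\{1,2\}$ and $\{3,4\}$ I use Lemma \ref{5}(3) together with the tensor-product identity $(A\otimes B)(A'\otimes B')=AA'\otimes BB'$: for instance $A_1A_2=(P_nQ_n)\otimes I_m=\omega_n(Q_nP_n)\otimes I_m=\omega_n A_2A_1$, and likewise $A_3A_4=\omega_m A_4A_3$, with the reversed orders giving the inverse scalars $\omega_n^{-1}$ and $\omega_m^{-1}$. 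For the ``cross'' pairs (one index in $\{1,2\}$, the other in $\{3,4\}$) the same tensor identity shows the matrices commute outright, e.g. $A_1A_3=P_n\otimes P_m=A_3A_1$, so the relevant scalar is $1$.

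The only point requiring care, rather than real difficulty, is checking that these computed scalars agree with the definition $\lambda_{ij}=e^{w_{ij}2\pi\mathrm{i}/\lcm(n,m)}$. Using the identity $\lcm(n,m)=nb=ma=abd$ recorded at the start of this section, one gets $\lambda_{12}=e^{b\cdot 2\pi\mathrm{i}/(nb)}=e^{2\pi\mathrm{i}/n}=\omega_n$ and $\lambda_{34}=e^{a\cdot 2\pi\mathrm{i}/(ma)}=e^{2\pi\mathrm{i}/m}=\omega_m$, while the entries $w_{21}=-b$ and $w_{43}=-a$ produce the inverses $\lambda_{21}=\omega_n^{-1}$, $\lambda_{43}=\omega_m^{-1}$, and all remaining (cross and diagonal) entries of $W$ being $0$ give $\lambda_{ij}=1$. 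This matches exactly the one-step relations above.

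Feeding these $\lambda_{ij}$ into the scalar-commutation fact of the first paragraph then yields $A_i^kA_j^l=\lambda_{ij}^{kl}A_j^lA_i^k$ for all $i,j$ and all $k,l\in\Z$, which is the claim. I do not expect a genuine obstacle here; the bookkeeping of the normalization constant $\lcm(n,m)$ in the exponent, so that each $\lambda_{ij}$ reduces to the correct root of unity, is the single thing to watch.
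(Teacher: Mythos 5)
Your proof is correct and takes essentially the same approach as the paper, whose entire proof of Lemma \ref{33} is the one-line remark that it ``follows from \ref{5} (and \ref{6})''. You have simply made explicit the steps the authors leave implicit: the single-step relations $A_iA_j=\lambda_{ij}A_jA_i$ from Lemma \ref{5}(3) and the tensor-product rules, the bookkeeping $\lcm(n,m)=nb=ma$ identifying each $\lambda_{ij}$ with the correct root of unity, and the standard scalar-commutation induction extending to all $k,l\in\Z$.
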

\begin{proof}
Follows from \ref{5} (and \ref{6}).
\end{proof}

\begin{lemma}\label{34}
$\Phi\Psi(\mathcal{N}(\P_{n}\otimes \P_{m}))\sub\mathcal{G}$.
\end{lemma}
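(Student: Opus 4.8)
The plan is to show that each $\Ad_M$ in the normalizer produces a class $[A]=\Phi\Psi(\Ad_M)$ that records the commutator relations among the generators $A_1,\dots,A_4$, and that preservation of those relations under conjugation by $M$ is exactly the symplectic-type condition $[A]^{\ast}j[A]=j$ defining $\mathcal{G}$. So fix $\Ad_M\in\mathcal{N}(\P_n\otimes \P_m)$, set $\alpha=\Psi(\Ad_M)\in\Aut(\P_n\otimes \P_m)$ and $[A]=\Phi(\alpha)$, and fix an integer representative $A=(A_{ij})$. Write $B_j:=\prod_{i=1}^{4}A_i^{A_{ij}}$. By the definition of $\Psi$ together with Lemma \ref{2}, $\alpha(e_j)=\Ad_{MA_jM^{-1}}$, while by Lemma \ref{31} and \ref{2}(i) also $\alpha(e_j)=\Ad_{B_j}$. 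Hence \ref{2}(iii) yields scalars $0\neq\gamma_j\in\C$ with $MA_jM^{-1}=\gamma_j B_j$.

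The first key step is that conjugation by $M$ preserves commutator scalars. Using the group commutator $[X,Y]:=XYX^{-1}Y^{-1}$ and the fact that scalars are central (so they cancel in commutators), we get $[B_j,B_k]=[\gamma_j^{-1}MA_jM^{-1},\gamma_k^{-1}MA_kM^{-1}]=M[A_j,A_k]M^{-1}=\lambda_{jk}$, the last equality by Lemma \ref{33}. On the other hand, in the group generated by $A_1,\dots,A_4$ every commutator $[A_i,A_l]=\lambda_{il}$ is a central scalar, so the commutator map is bilinear, and Lemma \ref{33} gives $[B_j,B_k]=\prod_{i,l}\lambda_{il}^{A_{ij}A_{lk}}$. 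Comparing the two expressions for $[B_j,B_k]$ and passing to exponents modulo $\lcm(n,m)=abd$ via Definition \ref{32.1}, the relation $\sum_{i,l}A_{ij}w_{il}A_{lk}\equiv w_{jk}$ holds for all $j,k$; that is, $A^{T}WA\equiv W$ modulo $abd$.

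It then remains to match this congruence with the defining relation of $\mathcal{G}$, and here I would carry out the block multiplication. Writing $W=\Big(\begin{smat2}{bJ_2}{0}{0}{aJ_2}\end{smat2}\Big)$ and $A=\Big(\begin{smat2}{A_{11}}{aA_{12}}{bA_{21}}{A_{22}}\end{smat2}\Big)$, a direct computation expresses both products through the same four integer blocks $C_{ij}\in\M_2(\Z)$, namely $A^{\ast}JA=\Big(\begin{smat2}{C_{11}}{aC_{12}}{bC_{21}}{C_{22}}\end{smat2}\Big)$ and $A^{T}WA=\Big(\begin{smat2}{bC_{11}}{abC_{12}}{abC_{21}}{aC_{22}}\end{smat2}\Big)$, where e.g. $C_{11}=A_{11}^{T}J_2A_{11}+abA_{21}^{T}J_2A_{21}$. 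Reducing $A^{T}WA\equiv W\pmod{abd}$ block by block and cancelling the factors $b,ab,ab,a$ (using $n=ad$ and $m=bd$) turns it into $C_{11}\equiv_n J_2$, $C_{12}\equiv_d 0$, $C_{21}\equiv_d 0$ and $C_{22}\equiv_m J_2$, which are exactly the four conditions defining $A^{\ast}JA\equiv J$. Hence $[A]^{\ast}j[A]=[A^{\ast}JA]=[J]=j$ by Lemmas \ref{12} and \ref{15}, so $[A]\in\mathcal{G}$.

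The conceptual core is the identification of the conjugation-invariant commutator data with the bilinear form $A^{T}WA\equiv W$; I expect the main obstacle to be the bookkeeping in the final step, namely correctly tracking the off-diagonal scalars $a,b$ together with the three distinct moduli $n=ad$, $m=bd$ and $d$, so that a single congruence modulo $abd$ collapses precisely onto the block-wise congruence $\equiv$ that defines $\mathcal{G}$.
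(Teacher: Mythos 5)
Your proof is correct and follows essentially the same route as the paper: the paper's computation of $GA_iA_jG^{-1}$ in two ways is exactly your statement that conjugation preserves the central commutator scalars $\lambda_{ij}$, and both arguments reduce the resulting congruence $b\det A^{(i,j)}_{(1,2)}+a\det A^{(i,j)}_{(3,4)}\equiv_{\lcm(n,m)}w_{ij}$ (your $A^{T}WA\equiv W \pmod{abd}$) to the defining relation of $\mathcal{G}$. The only difference is organizational: the paper cites Lemma \ref{23} for the final equivalence, whereas you re-derive its content inline via the block computation of $A^{\ast}JA$ versus $A^{T}WA$ --- the same computation carried out in the Appendix proof of that lemma.
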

\begin{proof}
Let $\Ad_{G}\in\mathcal{N}(\P_{n}\otimes \P_{m})$, where
$G\in\Inn(\GL(nm,\C)$. By \ref{31}, there is
$A=(a_{ij})\in\mathcal{S}$ such that
$\Phi\Psi(\Ad_{G})=[A]\in\mathcal{S}/_{\equiv}$. Then
$$\Psi(\Ad_{G})(e_{i})=e_{1}^{a_{1i}}e_{2}^{a_{2i}}e_{3}^{a_{3i}}e_{4}^{a_{4i}}$$
for $i=1,\dots,4$. By \ref{2}(iii), there are $0\neq\nu_{i}\in\C$
such that
$$GA_{i}G^{-1}=\nu_{i}A_{1}^{a_{1i}}A_{2}^{a_{2i}}A_{3}^{a_{3i}}A_{4}^{a_{4i}}$$
for $i=1,\dots,4$. Hence
$$GA_{i}A_{j}G^{-1}=GA_{i}G^{-1}GA_{j}G^{-1}=\nu_{i}\nu_{j}A_{1}^{a_{1i}}A_{2}^{a_{2i}}A_{3}^{a_{3i}}A_{4}^{a_{4i}}
A_{1}^{a_{1j}}A_{2}^{a_{2j}}A_{3}^{a_{3j}}A_{4}^{a_{4j}}=$$
$$=\nu_{i}\nu_{j}\lambda_{21}^{a_{1j}a_{2i}}\lambda_{43}^{a_{3j}a_{4i}}A_{1}^{a_{1i}+a_{1j}}A_{2}^{a_{2i}+a_{2j}}A_{3}^{a_{3i}+a_{3j}}A_{4}^{a_{4i}+a_{4j}}$$
by \ref{33}. On the other hand,
$$GA_{i}A_{j}G^{-1}=\lambda_{ij}GA_{j}A_{i}G^{-1}=\nu_{i}\nu_{j}\lambda_{ij}
\lambda_{21}^{a_{1i}a_{2j}}\lambda_{43}^{a_{3i}a_{4j}}A_{1}^{a_{1i}+a_{1j}}A_{2}^{a_{2i}+a_{2j}}A_{3}^{a_{3i}+a_{3j}}A_{4}^{a_{4i}+a_{4j}}.$$
Thus
$\lambda_{21}^{a_{1j}a_{2i}}\lambda_{43}^{a_{3j}a_{4i}}=\lambda_{ij}
\lambda_{21}^{a_{1i}a_{2j}}\lambda_{43}^{a_{3i}a_{4j}}$ for every
$i,j=1,\dots,4$. By \ref{32.1}, this is equivalent to
$$\exp\Big(2\pi\mathrm{i}\frac{b(a_{1i}a_{2j}-a_{1j}a_{2i})+a(a_{3i}a_{4j}-a_{3j}a_{4i})-w_{ij}}{\lcm(n,m)}\Big)=1.$$
This means that $$b\det A^{(i,j)}_{(1,2)}+a\det
A^{(i,j)}_{(3,4)}\equiv_{\lcm(n,m)}w_{ij}$$ for every
$i,j=1,\dots,4$. Hence, by \ref{23}, $[A]\in\mathcal{G}$.
\end{proof}

\begin{lemma}\label{35}
Put
$$R:=\diag(I_{m},Q^{b}_{m},Q^{2b}_{m},\dots,Q_{m}^{(n-1)b})\in\GL(nm,\C).$$
Then $\Ad_{R}\in \mathcal{N}(\P_{n}\otimes \P_{m})$ and
$\Phi\Psi(\Ad_{R})=r(-1)$.
\end{lemma}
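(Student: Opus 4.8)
The plan is to verify directly that $R$ normalizes the four generators, and then to read off the matrix describing the induced automorphism and match it with $r(-1)$. Writing the standard basis of $\H_n\otimes\H_m$ as $e_i\otimes e_j$ with $i\in\Z_n$ and $j\in\Z_m$, the block form of $R$ acts as $Q_m^{ib}$ on the $i$-th block, so $R(e_i\otimes e_j)=\omega_m^{ibj}(e_i\otimes e_j)$; in particular $R\in\GL(nm,\C)$ and $\Ad_R\in\Inn(\GL(nm,\C))$. Since $\P_n\otimes\P_m=\gen{e_1,e_2,e_3,e_4}$ by \ref{7} and conjugation by $\Ad_R$ is an automorphism of $\Inn(\GL(nm,\C))$, it suffices to check that each $\Ad_{RA_iR^{-1}}=\Ad_R e_i\Ad_R^{-1}$ lies in $\P_n\otimes\P_m$; the reverse inclusion is then automatic because $\P_n\otimes\P_m$ is finite.

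I then compute the four conjugates. As $A_2=Q_n\otimes I_m$ and $A_4=I_n\otimes Q_m$ are diagonal, they commute with $R$, giving $RA_2R^{-1}=A_2$ and $RA_4R^{-1}=A_4$. For $A_1=P_n\otimes I_m$ we have $A_1(e_i\otimes e_j)=e_{i-1}\otimes e_j$, and cancelling phases yields $RA_1R^{-1}(e_i\otimes e_j)=\omega_m^{-bj}(e_{i-1}\otimes e_j)$, i.e. $RA_1R^{-1}=A_4^{-b}A_1$. For $A_3=I_n\otimes P_m$ one finds $RA_3R^{-1}(e_i\otimes e_j)=\omega_m^{-ib}(e_i\otimes e_{j-1})$; the one point needing care is to rewrite the surviving first-factor phase using $d=\gcd(n,m)$, $a=n/d$, $b=m/d$. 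Since $b/m=a/n=1/d$ we get $\omega_m^{-ib}=\omega_n^{-ia}$, hence $RA_3R^{-1}=A_2^{-a}A_3$. All four conjugates lie in $\P_n\otimes\P_m$ (with trivial scalar), which proves $\Ad_R\in\mathcal{N}(\P_n\otimes\P_m)$.

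Finally I pass to $\Phi\Psi$. Since $\Psi(\Ad_R)(e_j)=\Ad_{RA_jR^{-1}}$, the computations give $\Psi(\Ad_R)(e_1)=e_1e_4^{-b}$, $\Psi(\Ad_R)(e_2)=e_2$, $\Psi(\Ad_R)(e_3)=e_2^{-a}e_3$ and $\Psi(\Ad_R)(e_4)=e_4$, the $e_i$ commuting by \ref{7}. Reading off exponents column by column as in \ref{31} gives the matrix
$$A=\begin{pmatrix}1&0&0&0\\0&1&-a&0\\0&0&1&0\\-b&0&0&1\end{pmatrix}\in\mathcal{S},$$
whose off-diagonal $2\times2$ blocks are the $a$- and $b$-multiples of $\left(\begin{smallmatrix}0&0\\-1&0\end{smallmatrix}\right)$. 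Comparing $A$ with the definition of $r(k)$ in \ref{24} identifies $[A]=r(-1)$, so $\Phi\Psi(\Ad_R)=r(-1)$. The substance of the argument is entirely in the phase bookkeeping of the second paragraph, above all the identity $\omega_m^{-ib}=\omega_n^{-ia}$; the normalizer property and the final matching to $r(-1)$ are then immediate.
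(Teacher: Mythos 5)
Your proof follows essentially the same route as the paper's: $R$ commutes with the diagonal generators $A_2,A_4$; conjugating $A_1$ produces the factor $A_4^{-b}$; conjugating $A_3$ produces $A_2^{-a}$ via the identity $\omega_m^{b}=e^{2\pi\mathrm{i}/d}=\omega_n^{a}$, which is exactly the paper's pivotal step; and your ``image contained in a finite group, so into implies onto'' remark legitimately settles the normalizer property, which the paper leaves implicit. Your exponent bookkeeping is in fact \emph{more} accurate than the paper's own concluding display: the paper writes $\Psi(\Ad_R)(e_1)=e_1e_4^{-a}$ and $\Psi(\Ad_R)(e_3)=e_2^{-b}e_3$, an evident $a\leftrightarrow b$ typo, since its intermediate computations give $(I_n\otimes Q_m)^{-b}$ and $(Q_n\otimes I_m)^{-a}$ exactly as you obtained (and $e_4^{-a}$ would place an entry $-a$ in the lower-left block of a matrix of $\mathcal{S}$, which must be divisible by $b$).

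However, your final sentence papers over a genuine discrepancy. Under the column convention of Lemma \ref{31} --- the convention needed for $\Phi$ to be a homomorphism rather than an antihomomorphism --- your matrix $A$ is correct, with off-diagonal blocks $-aE_{21}$ and $-bE_{21}$. But Definition \ref{24} places the nonzero off-diagonal entries of $r(-1)$ at positions $(1,4)$ and $(3,2)$, i.e., blocks $-aE_{12}$ and $-bE_{12}$. Since $E_{12}\not\equiv_{d} E_{21}$ when $d>1$, your class $[A]$ equals $r(-1)^{\ast}$, not $r(-1)$; the two coincide only for $d=1$, where $R=I_{nm}$ anyway. So the claim that ``comparing $A$ with the definition of $r(k)$ in \ref{24} identifies $[A]=r(-1)$'' is false as stated: had you actually performed the comparison, the positions would not have matched, and you should have flagged this. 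To be fair, the transposition inconsistency sits in the paper itself (the statement of Lemma \ref{35} carries the same defect, apparently from reading the matrix off by rows instead of columns), and it is harmless downstream: by Lemma \ref{27}(iii), $r(1)^{\ast}$ is an $\mathcal{R}$-conjugate of $r(1)$, so $\gen{\{r(1)\}\cup\mathcal{R}}=\gen{\{r(1)^{\ast}\}\cup\mathcal{R}}$, and Theorems \ref{30} and \ref{37} go through with $r(-1)$ replaced by $r(-1)^{\ast}$. A fully correct write-up should either conclude $\Phi\Psi(\Ad_R)=r(-1)^{\ast}$ or note that $r(k)$ ought to be defined with $E_{21}$ in place of $E_{12}$.
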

\begin{proof}
$R$ is a regular diagonal matrix. Hence $R(Q_{n}\otimes
I_{m})=(Q_{n}\otimes I_{m})R$ and $R(I_{n}\otimes
Q_{m})=(I_{n}\otimes Q_{m})R$.  Further, $$(P_{n}\otimes
I_{m})^{-1}R(P_{n}\otimes
I_{m})=\diag(Q_{m}^{-b},I_{m},Q_{m}^{b},\dots,Q^{(n-2)b}_{m})=(I_{n}\otimes
Q_{m})^{-b}R$$ and $$R(I_{n}\otimes
P_{m})=\diag(P_{m},Q^{b}_{m}P_{m},Q^{2b}_{m}P_{m},\dots,Q_{m}^{(n-1)b}P_{m})=$$
$$=\diag(P_{m},\omega_{m}^{-b}P_{m}Q^{b}_{m},\omega_{m}^{-2b}P_{m}Q^{2b}_{m},\dots,\omega_{m}^{-(n-1)b}P_{m}Q_{m}^{(n-1)b})=$$
$$=\diag(P_{m},(\omega_{n})^{-a}P_{m}Q^{b}_{m},(\omega_{n}^{2})^{-a}P_{m}Q^{2b}_{m},\dots,(\omega_{n}^{n-1})^{-a}P_{m}Q_{m}^{(n-1)b})=
$$$$=(Q_{n}\otimes I_{m})^{-a}(I_{n}\otimes P_{m})R$$ since
$\omega_{m}^{b}=e^{2\pi\mathrm{i}/d}=\omega_{n}^{a}$. Hence
$$\Psi(\Ad_{R})(e_{2})=e_{2}, \, \Psi(\Ad_{R})(e_{4})=e_{4}, \,
\Psi(\Ad_{R})(e_{1})=e_{1}e_{4}^{-a}  \textrm{and}
\Psi(\Ad_{R})(e_{3})=e_{2}^{-b}e_{3}.$$ It follows that
$\Phi\Psi(\Ad_{R})=r(-1)$ where $r(k)$ is defined in \ref{24}.
\end{proof}

For the proof of the following proposition and details see
\cite{HPPT02}.
\begin{proposition}\label{36}
$\Phi\Psi(\mathcal{N}(\P_{n})\otimes
\mathcal{N}(\P_{m}))=\mathcal{R}$ where $\mathcal{R}$ is defined in
\ref{24}.
\end{proposition}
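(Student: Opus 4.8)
The plan is to reduce everything to the single-factor result of \cite{HPPT02}, namely that the induced action of $\mathcal{N}(\P_n)$ on the exponent lattice of $\P_n\cong\Z_n\times\Z_n$ realizes precisely $\SL(2,\Z_n)$, with kernel $\P_n$; the analogous statement holds with $n$ replaced by $m$. Concretely, for $A\in\mathcal{N}(\P_n)$ the automorphism $\Ad_A$ sends $\Ad_{P_n}\mapsto\Ad_{P_n^{s}Q_n^{t}}$ and $\Ad_{Q_n}\mapsto\Ad_{P_n^{u}Q_n^{v}}$ for integers forming a matrix $\left(\begin{smallmatrix} s & u \\ t & v\end{smallmatrix}\right)\in\SL(2,\Z_n)$, the determinant-one condition coming from the requirement that $\Ad_A$ preserve the commutator pairing recorded by \ref{5}(3); and every element of $\SL(2,\Z_n)$ arises this way. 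I would take this theorem, together with its $m$-analogue, as the input from \cite{HPPT02}.

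Next I would compute $\Phi\Psi(\Ad_{A\otimes B})$ directly for $A\in\mathcal{N}(\P_n)$ and $B\in\mathcal{N}(\P_m)$. By multiplicativity of the tensor product under conjugation (property (i) of section 4), $(A\otimes B)A_1(A\otimes B)^{-1}=(AP_nA^{-1})\otimes I_m$, and since $\Ad_A$ normalizes $\P_n$ this is a scalar times $(P_n^{s}Q_n^{t})\otimes I_m$; hence $\Psi(\Ad_{A\otimes B})(e_1)=e_1^{s}e_2^{t}$ involves only the generators $e_1,e_2$ of the first factor. The same computation on $A_2=Q_n\otimes I_m$ yields a word in $e_1,e_2$ only, while $A_3$ and $A_4$ produce words in $e_3,e_4$ only, governed by $\Ad_B$. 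Reading off the exponent matrix in the sense of \ref{31}, the class $\Phi\Psi(\Ad_{A\otimes B})$ is represented by a block-diagonal element $\left(\begin{smallmatrix} A_{11} & 0 \\ 0 & A_{22}\end{smallmatrix}\right)\in\mathcal{S}$, whose upper-left block $A_{11}$ is the $\SL(2,\Z_n)$-matrix of $\Ad_A$ and whose lower-right block $A_{22}$ is the $\SL(2,\Z_m)$-matrix of $\Ad_B$. In particular the coupling blocks vanish, which is exactly the feature distinguishing these classes from the genuinely mixing element $r(-1)$ produced in \ref{35}.

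Finally I would match the image with $\mathcal{R}$. The inclusion $\Phi\Psi(\mathcal{N}(\P_n)\otimes\mathcal{N}(\P_m))\sub\mathcal{R}$ follows at once from the previous paragraph, since each such class is block-diagonal with $A_{11}\in\SL(2,\Z_n)$ and $A_{22}\in\SL(2,\Z_m)$, which is how $\mathcal{R}$ is defined in \ref{24}. For the reverse inclusion I would invoke the surjectivity half of the single-factor theorem: given any target $(A_{11},A_{22})\in\SL(2,\Z_n)\times\SL(2,\Z_m)$ defining a class in $\mathcal{R}$, choose $A\in\mathcal{N}(\P_n)$ realizing $A_{11}$ and $B\in\mathcal{N}(\P_m)$ realizing $A_{22}$; then $\Ad_{A\otimes B}\in\mathcal{N}(\P_n)\otimes\mathcal{N}(\P_m)$ maps under $\Phi\Psi$ to the prescribed block-diagonal class. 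This gives $\mathcal{R}\sub\Phi\Psi(\mathcal{N}(\P_n)\otimes\mathcal{N}(\P_m))$ and hence equality.

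The substantive obstacle is not the tensor bookkeeping, which is forced by property (i) and the decoupling it produces, but the single-factor input itself: proving that \emph{every} matrix in $\SL(2,\Z_n)$ is actually realized by some normalizing $\Ad_A$ requires the explicit construction of normalizers for a generating set of $\SL(2,\Z_n)$ together with control of the determinant-one constraint, and this is precisely the content carried over from \cite{HPPT02}. Once that is granted, the present proposition is a clean factorization statement.
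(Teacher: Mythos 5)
Your proposal is correct and matches the paper's intent exactly: the paper gives no argument of its own for Proposition \ref{36}, deferring entirely to \cite{HPPT02} for the single-factor result $\mathcal{N}(\P_n)/\P_n\cong\SL(2,\Z_n)$, and your proof is precisely the expansion that citation presupposes --- the block-diagonal exponent bookkeeping for $\Ad_{A\otimes B}$ in both directions, with surjectivity of the single-factor map supplying the reverse inclusion. You also correctly identify that the only substantive content lives in the imported theorem, so nothing further is needed.
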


\begin{theorem}\label{37}
\begin{enumerate}
\item[(i)] \begin{equation}
\mathcal{N}(\P_{n}\otimes \P_{m})/\P_{n}\otimes \P_{m}
\cong\mathcal{G}.\end{equation}
\item[(ii)] The group $\mathcal{N}(\P_{n}\otimes \P_{m})$ is generated by
$\mathcal{N}(\P_{n})\otimes\mathcal{N}(\P_{m})$ and $\Ad_{R}$.
\end{enumerate}
\end{theorem}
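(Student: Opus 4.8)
The plan is to show that the composite $\Phi\Psi$ realises the desired isomorphism, and then to upgrade the resulting surjectivity into the generation statement (ii).

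First I would assemble the homomorphism. By Lemma~\ref{32}, $\Psi$ is a group homomorphism with $\ker(\Psi)=\P_n\otimes\P_m$, and by Lemma~\ref{31}, $\Phi$ is a monoid monomorphism; hence the composite $\Phi\Psi:\mathcal{N}(\P_n\otimes\P_m)\to\mathcal{S}/_{\equiv}$ is a group homomorphism with $\ker(\Phi\Psi)=\ker(\Psi)=\P_n\otimes\P_m$. By Lemma~\ref{34}, its image is contained in $\mathcal{G}$. The first isomorphism theorem then yields an injection $\mathcal{N}(\P_n\otimes\P_m)/(\P_n\otimes\P_m)\hookrightarrow\mathcal{G}$, so all of (i) reduces to showing that $\Phi\Psi$ maps \emph{onto} $\mathcal{G}$.

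Next I would prove this surjectivity by collecting enough elements of the image. Since $\mathcal{N}(\P_n)\otimes\mathcal{N}(\P_m)\subseteq\mathcal{N}(\P_n\otimes\P_m)$ and $\Ad_R\in\mathcal{N}(\P_n\otimes\P_m)$ by Lemma~\ref{35}, the image of $\Phi\Psi$ contains both $\Phi\Psi(\mathcal{N}(\P_n)\otimes\mathcal{N}(\P_m))=\mathcal{R}$ (Proposition~\ref{36}) and $\Phi\Psi(\Ad_R)=r(-1)$ (Lemma~\ref{35}). As $r(-1)$ is the inverse of $r(1)$, the subgroup generated by $\mathcal{R}\cup\{r(-1)\}$ equals the one generated by $\mathcal{R}\cup\{r(1)\}$, which is all of $\mathcal{G}$ by Theorem~\ref{30}. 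Hence $\mathcal{G}\subseteq\Phi\Psi(\mathcal{N}(\P_n\otimes\P_m))$, and together with Lemma~\ref{34} this gives equality, proving (i).

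For (ii) I would run the standard ``same image, kernel inside'' argument. Let $H$ be the subgroup of $\mathcal{N}(\P_n\otimes\P_m)$ generated by $\mathcal{N}(\P_n)\otimes\mathcal{N}(\P_m)$ together with $\Ad_R$. Since $\Phi\Psi$ is a homomorphism, $\Phi\Psi(H)$ is generated by $\mathcal{R}\cup\{r(-1)\}$ and therefore equals $\mathcal{G}$ by the previous paragraph. Moreover $\P_n\subseteq\mathcal{N}(\P_n)$ and $\P_m\subseteq\mathcal{N}(\P_m)$, so $\P_n\otimes\P_m\subseteq\mathcal{N}(\P_n)\otimes\mathcal{N}(\P_m)\subseteq H$, i.e. $\ker(\Phi\Psi)\subseteq H$. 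Given any $g\in\mathcal{N}(\P_n\otimes\P_m)$, the surjectivity of $\Phi\Psi$ on $H$ provides $h\in H$ with $\Phi\Psi(g)=\Phi\Psi(h)$; then $gh^{-1}\in\ker(\Phi\Psi)\subseteq H$, whence $g\in H$. Thus $H=\mathcal{N}(\P_n\otimes\P_m)$. The genuine content has already been front-loaded into Lemmas~\ref{34} and~\ref{35}, Proposition~\ref{36} and Theorem~\ref{30}; once these are available the theorem is a formal assembly. The step I expect to demand the most care is the surjectivity: one must confirm, from the definition of $r(k)$ in~\ref{24}, that $r(-1)$ and $r(1)$ generate the same cyclic subgroup, so that the single automorphism $\Ad_R$ furnished by Lemma~\ref{35} genuinely bridges the gap between $\mathcal{R}$ and the full generating set of $\mathcal{G}$ required by Theorem~\ref{30}.
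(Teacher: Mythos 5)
Your proof is correct and follows essentially the same route as the paper: both obtain $\ker(\Phi\Psi)=\P_n\otimes\P_m$ from Lemmas~\ref{31} and~\ref{32}, deduce $\Phi\Psi(\mathcal{N}(\P_n\otimes\P_m))=\mathcal{G}$ from Lemmas~\ref{34}, \ref{35}, Proposition~\ref{36} and Theorem~\ref{30}, and prove (ii) by comparing the subgroup generated by $\mathcal{N}(\P_n)\otimes\mathcal{N}(\P_m)$ and $\Ad_R$ with the full normalizer via the same image-plus-kernel argument. Your explicit observation that $r(-1)$ and $r(1)$ generate the same cyclic subgroup (clear from $r(k)=r(1)^k$ in Lemma~\ref{26}(i)) merely spells out a step the paper leaves implicit.
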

\begin{proof}
(i) By \ref{30}, $\mathcal{G}$ is generated by $r(1)$ and $\mathcal{R}$.
Hence, by \ref{34}, \ref{35} and \ref{36},
$\Phi\Psi(\mathcal{N}(\P_{n}\otimes \P_{m}))=\mathcal{G}$. Finally,
by \ref{31} and \ref{32}, $\ker(\Phi\Psi)=\P_{n}\otimes \P_{m}$.

(ii) Let $\mathcal{N}$ be a subgroup of $\mathcal{N}(\P_{n}\otimes
\P_{m})$ generated by
$\mathcal{N}(\P_{n})\otimes\mathcal{N}(\P_{m})$ and $\Ad_{R}$. Then
$\ker(\Phi\Psi)=\P_{n}\otimes \P_{m}=\P_{n}\otimes
\P_{m}\sub\mathcal{N}(\P_{n})\otimes\mathcal{N}(\P_{m})\sub\mathcal{N}$
and, by \ref{35}, \ref{36} and \ref{30},
$\Phi\Psi(\mathcal{N})=\mathcal{G}$. Hence
$\mathcal{N}=\mathcal{N}(\P_{n}\otimes \P_{m})$.
\end{proof}

\begin{corollary}\label{38}
$\mathcal{N}(\P_{n}\otimes
\P_{m})=\mathcal{N}(\P_{n})\otimes\mathcal{N}(\P_{m})$ if and only
if $\gcd(n,m)=1$.
\end{corollary}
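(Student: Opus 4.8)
The plan is to collapse the equality of the two normalizers to a single membership question inside $\mathcal{G}$ and then to settle it by inspecting off-diagonal blocks. The inclusion $\supseteq$ is automatic (as observed just before Lemma \ref{31}), so only the reverse inclusion is ever in question.

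First I would invoke Theorem \ref{37}(ii): since $\mathcal{N}(\P_n \otimes \P_m)$ is generated by $\mathcal{N}(\P_n) \otimes \mathcal{N}(\P_m)$ together with $\Ad_R$, the desired equality holds if and only if $\Ad_R \in \mathcal{N}(\P_n) \otimes \mathcal{N}(\P_m)$. I would then transport this membership through $\Phi\Psi$. By Lemmas \ref{31} and \ref{32} one has $\ker(\Phi\Psi) = \P_n \otimes \P_m$, and this kernel sits inside $\mathcal{N}(\P_n) \otimes \mathcal{N}(\P_m)$; hence the latter is a full $\Phi\Psi$-preimage, and $\Ad_R$ lies in it exactly when $\Phi\Psi(\Ad_R)$ lies in $\Phi\Psi(\mathcal{N}(\P_n) \otimes \mathcal{N}(\P_m))$. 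By Lemma \ref{35} this image is $r(-1)$, and by Proposition \ref{36} the image of the subgroup is $\mathcal{R}$. Thus the whole corollary reduces to the question of whether $r(-1) \in \mathcal{R}$.

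The hypothesis on $\gcd(n,m)$ enters exactly here, through the congruence $\equiv$ of Definition \ref{11}. Every element of $\mathcal{R}$ is the $\Phi\Psi$-image of some $\Ad_{A \otimes B}$ with $A \in \mathcal{N}(\P_n)$ and $B \in \mathcal{N}(\P_m)$; such an automorphism maps $\gen{e_1, e_2}$ into itself and $\gen{e_3, e_4}$ into itself, so every class of $\mathcal{R}$ admits a block-diagonal representative and in particular has both off-diagonal blocks $\equiv_d 0$. By contrast, Lemma \ref{35} presents $r(-1)$ as a class whose off-diagonal blocks, read in the $\mathcal{S}$-coordinates of Definition \ref{11}, reduce to nonzero residues modulo $d$ precisely when $d > 1$. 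Consequently $r(-1) \in \mathcal{R}$ if and only if $d = \gcd(n,m) = 1$, and chaining the equivalences gives the statement.

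The one step that demands genuine care is the last comparison: I must check that the off-diagonal residue of $r(-1)$ modulo $d$ is truly nonzero --- in fact a unit --- whenever $d > 1$, and that no class of $\mathcal{R}$ can carry a nonzero off-diagonal residue. Both facts are immediate from the block-diagonal shape of the elements of $\mathcal{N}(\P_n) \otimes \mathcal{N}(\P_m)$ and from the definition of $\equiv$, so no real obstacle remains; the content of the corollary is just the visibility, modulo $d$, of the block-mixing induced by $\Ad_R$.
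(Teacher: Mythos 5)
Your proposal is correct and takes essentially the paper's approach: both reduce the statement via Theorem \ref{37}(ii), Lemma \ref{35} and Proposition \ref{36} to whether $r(\pm 1)\in\mathcal{R}$, and decide that by inspecting the off-diagonal blocks modulo $d$ under the congruence of Definition \ref{11}. The only (harmless) deviation is in the ($\Leftarrow$) direction, where the paper simply observes that $d=1$ forces $b=m$ and hence $R=I_{nm}$, while you transfer $r(-1)\in\mathcal{R}$ back through $\Phi\Psi$ using $\ker(\Phi\Psi)=\P_{n}\otimes\P_{m}\subseteq\mathcal{N}(\P_{n})\otimes\mathcal{N}(\P_{m})$ --- a slightly less concrete but equally valid argument.
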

\begin{proof}
($\Rightarrow$) Let $\mathcal{N}(\P_{n}\otimes
\P_{m})=\mathcal{N}(\P_{n})\otimes\mathcal{N}(\P_{m})$. Then, by
\ref{36} and \ref{35}, $r(1)\in\mathcal{R}$. Hence $a\equiv_{n}0$ and $b\equiv_{m}0$. Thus $n=a$, $m=b$ and
$\gcd(n,m)=\gcd(a,b)=1$.

($\Leftarrow$) If $d=\gcd(n,m)=1$, then $b=m/d=m$ and $R=I_{nm}$, by
\ref{35}. Hence, by \ref{37}(ii), $\mathcal{N}(\P_{n}\otimes
\P_{m})=\mathcal{N}(\P_{n})\otimes\mathcal{N}(\P_{m})$.
\end{proof}

\section{$\mathcal{G}$ as a group extension}

In order to get better insight in the structure of the symmetry
group $\mathcal{G}$ we present it as a suitable extension of groups.
In this section we shall denote it $\mathcal{G}_{n,m}$ equipped with
subscripts $n,m\in \N$.

(i) For $k\in\N$, $A=\Big(\begin{smallmatrix}A_{11}&A_{12}\\
A_{21}&A_{22}\end{smallmatrix}\Big)\in\M_{4}(\Z)$ and
$B=\Big(\begin{smallmatrix}B_{11}&B_{12}\\
B_{21}&B_{22}\end{smallmatrix}\Big)\in\M_{4}(\Z)$ put
$$A\ast_{k}B=\begin{mat2}{A_{11}B_{11}+kA_{12}B_{21}}{A_{11}B_{12}+
A_{12}B_{22}}{A_{21}B_{11}+A_{22}B_{21}}{A_{22}B_{22}+kA_{21}B_{12}}\end{mat2}.$$
Then $\big(\M_{4}(\Z),\ast_{k}\big)$ is a monoid. Indeed, consider
the map
$\nu:\big(\M_{4}(\Z),\ast_{k}\big)\to\big(\M_{4}(\Z),\cdot)$:
$\Big(\begin{smallmatrix}A_{11}&A_{12}\\A_{21}&A_{22}\end{smallmatrix}\Big)\mapsto
\Big(\begin{smallmatrix}A_{11}&kA_{12}\\A_{21}&A_{22}\end{smallmatrix}\Big)$.
Clearly, $\nu$ is injective and $\nu(A\ast_{k}B)=\nu(A)\cdot \nu(B)$
for every $A,B\in\M_{4}(\Z)$. Thus $\nu$ is a monomorphism and
$\ast_{k}$ is associative.

(ii) Let $d\in\N$. By (i), $\ast_{k}$ is a well defined operation
also on $\M_{4}(\Z_{d})$. We denote as in section 5 the class
$j=\Big[\begin{smallmatrix}J_{2}&0\\0&J_{2}\end{smallmatrix}\Big]\in
\M_{4}(\Z_{d})$ and set
$$\mathrm{Sp}_{k}(4,\Z_{d})=\set{x\in\M_{4}(\Z_{d})}{x^{T}\ast_{k}j\ast_{k}x=j}.$$
Using the map $\nu$ we easily get that the monoid $\mathrm{Sp}_{k}(4,\Z_{d})$
is isomorphic to $\mathcal{G}=\mathcal{G}_{d,d,d,k,1}$ (see \ref{19})
and thus it is a group. Note that if $k\equiv_{d}1$ then
$\mathrm{Sp}_{k}(4,\Z_{d})$ is just the usual symplectic group
$\mathrm{Sp}(4,\Z_{d})$.

(iii) Let $n,m\in\N$ and $d=\gcd(n,m)$. Consider now the map
$$\pi:\mathcal{G}_{n,m}\to \mathrm{Sp}_{\frac{nm}{d^{2}}}(4,\Z_{d}) :
\Big[\begin{smallmatrix}A_{11}&aA_{12}\\
bA_{21}&A_{22}\end{smallmatrix}\Big]
\mapsto\Big[\begin{smallmatrix}A_{11}&A_{12}\\
A_{21}&A_{22}\end{smallmatrix}\Big]$$ where $a=n/d$ and $b=m/d$. By
(ii) and \ref{30}, $\pi$ is an epimorphism. Clearly, the kernel of
$\pi$ is $$\ker(\pi)=\bigg\{\left[\begin{array}{cc}I+dA & 0\\ 0 &
I+dB\end{array}\right]\in\mathcal{S}/_{\equiv} \bigg|\det
(I+dA)\equiv_{n}1\ \&\ \det (I+dB)\equiv_{m}1 \bigg\}.$$ Note that
$\det (I+dA)\equiv_{n}1$ is equivalent to
$\Tr(A)+d\det(A)\equiv_{n/d}0$, for a matrix $A\in\M_{2}(\Z)$.
Denote now $$S_{k,l}=\set{[I+kA]\in\M_{2}(\Z_{l})}{A\in\M_{2}(\Z)\
\&\ \Tr(A)+k\det(A)\equiv_{l/k}0}$$ for integers $k,l$ such that
$k\mid l$. Clearly, $S_{k,l}$ is a group and  we get that
$\ker(\pi)=S_{d,n}\times S_{d,m}$. We have thus obtained a short
exact sequence describing an extension of groups:
$$
1 \to (S_{d,n}\times S_{d,m})\to \mathcal{G}_{n,m} \to
\mathrm{Sp}_{\frac{nm}{d^{2}}}(4,\Z_{d}) \to 1.$$

\section{Conclusions}

In this paper complete results are presented concerning the symmetry
group of the finite Heisenberg group of a composite quantum system
consisting of two subsystems with arbitrary dimensions $n,m$. The
corresponding finite Heisenberg group is embedded in $\GL(N,\C)$,
$N=nm$. Via inner automorphisms it induces an Abelian subgroup
$\P_n\otimes\P_m$ in $\Inn{(\GL(N,\C))}$. We have studied the
normalizer of this Abelian subgroup in the group of inner
automorphisms of $\GL(N,\C)$ and have thoroughly described it. The
sought symmetry group $\mathcal{G}$ is the quotient group of the
normalizer (Theorem \ref{37}) and its further characterizations are
given in sections 5, 7 and the Appendix. In section 7 an alternative
description of $\mathcal{G}$ is presented in terms of a group
extension.

The special case of $n=m=p$, $p$ prime, $N=p^2$, is simply observed
to correspond to the group $\mathcal{G}= \Sp(4,\Z_p)$, fully
described in \cite{PST06}. Note that their result is here
generalized to $n=m$ arbitrary (non-prime), leading to $\mathcal{G}=
\Sp(4,\Z_n)$. If $N=nm$, $n,m$ coprime, the symmetry group is,
according to \cite{HPPT02} and Corollary \ref{38}, $\mathcal{G}=
\SL(2,\Z_n)\times \SL(2,\Z_m) \cong \SL(2,\Z_{nm})$

\section*{Acknowledgements}
The first author (M.K.) was supported by the Grant Agency of Charles
University, project \#4183/2009. The second author (J.T.) acknowledges
partial support by the Ministry of Education of Czech Republic,
projects MSM6840770039 and LC06002.

\section{Appendix: Characterization of $\mathcal{G}$}

In the Appendix, the classes $[A]\in\mathcal{G}$ are first
characterized by properties satisfied by elements of matrices
$A\in\mathcal{S}$ (Lemma \ref{23}). Further in Definition \ref{24}
special classes $r(k)$ and set of classes $\mathcal{R}$ in
$\mathcal{S}/_{\equiv}$ are introduced and it is shown that $r(1)$
and $\mathcal{R}$ generate a subgroup of $\mathcal{G}$ (Corollary
\ref{28}). The last part of the Appendix is devoted to the proof
that the set $\{r(1)\}\cup\mathcal{R}$ in fact generates
$\mathcal{G}$ (Theorem \ref{30}). Also recall that $n,m,a,b,d\in\N$
are such that $d\mid n,\ \ \ d\mid m,\ \ \ n\mid abd\ \ \
\textrm{and}\ \ \ m\mid abd$.
\begin{definition}\label{21}
Let $1\leq i<j\leq 4$ and $1\leq k<l\leq 4$.
 For a matrix $A=(a_{rs})_{r,s=1,\dots,4}\in\M_{4}(\Z)$
 let $A^{(i,j)}_{(k,l)}\in\M_{2}(\Z)$ be a submatrix of $A$
 such that $$A^{(i,j)}_{(k,l)}=
 \begin{mat2}{a_{ki}}{a_{kj}}{a_{li}}{a_{lj}}\end{mat2}.$$
For $A=\Big(\begin{smat2}{A_{11}}{aA_{12}}{bA_{21}}{A_{22}}\end{smat2}\Big)\in\mathcal{S}$ denote $\widetilde{A}=\Big(\begin{smat2}{A_{11}}{A_{12}}{A_{21}}{A_{22}}\end{smat2}\Big)$.
\end{definition}

\begin{lemma}\label{23}
Let
$A=\Big(\begin{smat2}{A_{11}}{aA_{12}}{bA_{21}}{A_{22}}\end{smat2}\Big)
\in\mathcal{S}$. Then $[A]\in\mathcal{G}$ if and only if $$\det
\widetilde{A}_{(1,2)}^{(1,2)}+ab\det
\widetilde{A}_{(3,4)}^{(1,2)}\equiv_{n}1,$$
 $$ab\det \widetilde{A}_{(1,2)}^{(3,4)}+\det \widetilde{A}_{(3,4)}^{(3,4)}\equiv_{m}1$$ and
 $$\det \widetilde{A}_{(1,2)}^{(i,j)}+\det \widetilde{A}_{(3,4)}^{(i,j)}\equiv_{d}0$$ for every $1\leq i<j\leq 4$ such that $(i,j)\neq (1,2),(3,4)$.\end{lemma}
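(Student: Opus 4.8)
The plan is to translate the defining relation of $\mathcal{G}$ into one explicit matrix congruence in $\mathcal{S}$ and then read off the three families of minor conditions block by block. By Definition \ref{19} and Lemma \ref{15}(iii), and since multiplication and $\ast$ both descend to $\mathcal{S}/_{\equiv}$, the class $[A]$ lies in $\mathcal{G}$ if and only if $A^{\ast}JA\equiv J$ in $\mathcal{S}$. As $A,A^{\ast},J\in\mathcal{S}$ and $\mathcal{S}$ is a monoid (Lemma \ref{12}), the product $A^{\ast}JA$ again lies in $\mathcal{S}$, so by Definition \ref{11} the single congruence $A^{\ast}JA\equiv J$ splits into a comparison of the $(1,1)$-block modulo $n$, of the two reduced off-diagonal blocks modulo $d$, and of the $(2,2)$-block modulo $m$. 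Everything therefore reduces to computing the four blocks of $A^{\ast}JA$.

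First I would compute $A^{\ast}J$ and then $(A^{\ast}J)A$ using the product formula from the proof of Lemma \ref{12}, together with $A^{\ast}=\Big(\begin{smat2}{A_{11}^{T}}{aA_{21}^{T}}{bA_{12}^{T}}{A_{22}^{T}}\end{smat2}\Big)$ and the fact that $J$ is block-diagonal with both diagonal blocks equal to $J_{2}$. This should yield
$$A^{\ast}JA=\begin{pmatrix} (\det A_{11}+ab\det A_{21})J_{2} & a(A_{11}^{T}J_{2}A_{12}+A_{21}^{T}J_{2}A_{22}) \\ b(A_{12}^{T}J_{2}A_{11}+A_{22}^{T}J_{2}A_{21}) & (ab\det A_{12}+\det A_{22})J_{2} \end{pmatrix},$$
where the diagonal blocks are obtained from the elementary identity $M^{T}J_{2}M=(\det M)J_{2}$, valid for every $M\in\M_{2}(\Z)$. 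Comparing the diagonal blocks with those of $J$ entrywise (reading off a nonzero entry of $J_{2}$) and using $\det A_{11}=\det\widetilde{A}^{(1,2)}_{(1,2)}$, $\det A_{21}=\det\widetilde{A}^{(1,2)}_{(3,4)}$, $\det A_{12}=\det\widetilde{A}^{(3,4)}_{(1,2)}$, $\det A_{22}=\det\widetilde{A}^{(3,4)}_{(3,4)}$ gives precisely the first congruence modulo $n$ and the second modulo $m$.

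For the off-diagonal part, the reduced $(1,2)$-block is $A_{11}^{T}J_{2}A_{12}+A_{21}^{T}J_{2}A_{22}$. Expanding each of its four entries, I expect each to be a difference of products of two entries of $\widetilde{A}$, that is, a $2\times2$ minor; tracking rows and columns according to the conventions of Definition \ref{21} should identify the four entries exactly with $\det\widetilde{A}^{(i,j)}_{(1,2)}+\det\widetilde{A}^{(i,j)}_{(3,4)}$ for $(i,j)\in\{(1,3),(1,4),(2,3),(2,4)\}$. These are precisely the pairs $1\le i<j\le 4$ other than $(1,2),(3,4)$, and comparing them modulo $d$ with the zero off-diagonal block of $J$ yields the third family of congruences. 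Since $J_{2}^{T}=-J_{2}$, the reduced $(2,1)$-block equals the negative transpose of the $(1,2)$-block, so it imposes no new condition modulo $d$.

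The block multiplications and the determinant identity are routine; the one step requiring care is the bookkeeping in the third paragraph, namely matching each entry of the off-diagonal product with the correct minor $\widetilde{A}^{(i,j)}_{(k,l)}$ and verifying that the four arising pairs $(i,j)$ are exactly those excluded from the first two conditions. I expect no genuine obstacle beyond keeping the row/column index conventions straight, and both directions of the equivalence then follow simultaneously from the blockwise equality $A^{\ast}JA\equiv J$.
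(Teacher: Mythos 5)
Your proposal is correct and follows essentially the same route as the paper: reduce $[A]\in\mathcal{G}$ to the blockwise congruence $A^{\ast}JA\equiv J$, apply $U^{T}J_{2}U=(\det U)J_{2}$ to the diagonal blocks, and identify the entries of $A_{11}^{T}J_{2}A_{12}+A_{21}^{T}J_{2}A_{22}$ with the minors $\det\widetilde{A}^{(i,j)}_{(1,2)}+\det\widetilde{A}^{(i,j)}_{(3,4)}$ for $(i,j)\in\{(1,3),(1,4),(2,3),(2,4)\}$. Your explicit observation that the $(2,1)$-block is the negative transpose of the $(1,2)$-block (via $J_{2}^{T}=-J_{2}$), hence redundant, is left implicit in the paper but is exactly the right justification.
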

\begin{proof}
We have $$A^{\ast}JA=\begin{mat2}{A^{T}_{11}}{aA^{T}_{21}}{bA^{T}_{12}}{A^{T}_{22}}\end{mat2}
\begin{mat2}{J_{2}}{0}{0}{J_{2}}\end{mat2}
\begin{mat2}{A_{11}}{aA_{12}}{bA_{21}}{A_{22}}\end{mat2}=$$
$$=\begin{mat2}{A^{T}_{11}J_{2}A_{11}+abA^{T}_{21}J_{2}A_{21}}{a(A^{T}_{11}J_{2}A_{12}+A^{T}_{21}J_{2}A_{22})}
{b(A^{T}_{12}J_{2}A_{11}+A^{T}_{22}J_{2}A_{21})}{abA^{T}_{12}J_{2}A_{12}+A^{T}_{22}J_{2}A_{22}}\end{mat2}.$$
Hence $[A^{\ast}JA]=[J]$ if and only if $$A^{T}_{11}J_{2}A_{11}+abA_{21}^{T}J_{2}A_{21}\equiv_{n}J_{2},$$
$$abA_{12}^{T}J_{2}A_{12}+A^{T}_{22}J_{2}A_{22}\equiv_{m}J_{2},$$ and
$$A^{T}_{11}J_{2}A_{12}+A_{21}^{T}J_{2}A_{22}\equiv_{d}0.$$ Now use that $A_{11}^{T}J_{2}A_{12}=\Big(\begin{smallmatrix}\det \widetilde{A}_{(1,2)}^{(1,3)}&\det \widetilde{A}_{(1,2)}^{(1,4)}\\\det \widetilde{A}_{(1,2)}^{(2,3)}&\det \widetilde{A}_{(1,2)}^{(2,4)}\end{smallmatrix}\Big)$,
    $A_{21}^{T}J_{2}A_{22}=\Big(\begin{smallmatrix}\det \widetilde{A}_{(3,4)}^{(1,3)}&\det \widetilde{A}_{(3,4)}^{(1,4)}\\\det \widetilde{A}_{(3,4)}^{(2,3)}&\det \widetilde{A}_{(3,4)}^{(2,4)}\end{smallmatrix}\Big)$ and\\  $U^{T}J_{2}U=(\det U) J_{2}$ for $U\in\M_{2}(\Z)$.
\end{proof}

\begin{definition}\label{24}
Let $k\in\Z$. Put $$r(k):=\left[\begin{array}{cccc} 1&0&0&ak\\
0&1&0&0\\ 0&bk&1&0\\ 0&0&0&1
\end{array}\right]\in\mathcal{S}/_{\equiv}$$ and
$$\mathcal{R}:=\Big\{\big[\begin{smallmatrix}S & 0\\ 0 & T\end{smallmatrix}\big]\in\mathcal{S}/_{\equiv} \Big|
\ \ S,T\in\M_{2}(\Z)\ \&\ \det S\equiv_{n}1\ \&\ \det T\equiv_{m}1
\Big\}.$$ Finally, denote
$E_{11}=\big(\begin{smallmatrix}1&0\\0&0\end{smallmatrix}\big)$,
$E_{12}=\big(\begin{smallmatrix}0&1\\0&0\end{smallmatrix}\big)$,
$E_{21}=\big(\begin{smallmatrix}0&0\\1&0\end{smallmatrix}\big)$ and
$E_{22}=\big(\begin{smallmatrix}0&0\\0&1\end{smallmatrix}\big)$
matrices in $\M_{2}(\Z)$.
\end{definition}
Notice that $\mathcal{R}$ is well defined.

\begin{lemma}\label{26}
\begin{enumerate}
\item[(i)] $r(k)=r(1)^{k}$ for every $k\in\Z$ and $r(1)^{d}=1_{\mathcal{G}}$, where $1_{\mathcal{G}}$
is the unit of $I_4$.
\item[(ii)] $r(1)\in\mathcal{G}$.
\item[(iii)] $\mathcal{R}$ is a subgroup of $\mathcal{G}$.
\end{enumerate}
\end{lemma}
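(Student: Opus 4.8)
The plan is to handle the three parts in order, reducing each to either an explicit power computation in $(\M_4(\Z),\cdot)$ or to the minor-criterion of Lemma \ref{23}. For (i), I would first observe that in $\M_4(\Z)$ we have $r(1)=I_4+N$, where $N$ has its only nonzero entries $a$ and $b$ in positions $(1,4)$ and $(3,2)$. A direct check gives $N^2=0$: a nonzero entry of $N^2$ would require an index $k$ lying simultaneously in $\{2,4\}$ and in $\{1,3\}$, which is impossible. Since $\mathcal S$ is a submonoid of $(\M_4(\Z),\cdot)$ by Lemma \ref{12}, the power $r(1)^k$ is the ordinary matrix power, so $r(1)^k=(I_4+N)^k=I_4+kN=r(k)$. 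Taking $k=d$, the off-diagonal blocks of $r(d)$ are $dE_{12}$, whence $dE_{12}\equiv_d0$ while the diagonal blocks stay $I$; thus $r(d)\equiv I_4$ and $r(1)^d=[r(d)]=[I_4]=1_{\mathcal G}$.

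For (ii) I would apply Lemma \ref{23} to $r(1)$, whose block data are $A_{11}=A_{22}=I$ and $A_{12}=A_{21}=E_{12}$. The three families of congruences then reduce to a handful of $2\times2$ minors of $\widetilde A$: the $(1,2)$-condition gives $1\equiv_n1$, the $(3,4)$-condition gives $1\equiv_m1$, and for each remaining pair $(i,j)$ the two relevant minors cancel (the only nontrivial cancellation being $-1+1=0$ at $(i,j)=(2,4)$), so that $0\equiv_d0$. Hence $r(1)\in\mathcal G$. Alternatively one can compute $r(1)^{\ast}Jr(1)=J$ directly, which even yields equality in $\mathcal S$ rather than merely modulo $\equiv$.

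For (iii) I would first show $\mathcal R\sub\mathcal G$ by feeding a block-diagonal $A=\diag(S,T)$ (so $A_{12}=A_{21}=0$ and $\widetilde A=A$) into Lemma \ref{23}: the first condition reads $\det S\equiv_n1$, the second reads $\det T\equiv_m1$, and every mixed minor in the third vanishes because each relevant $2\times2$ submatrix has a zero column. To see that $\mathcal R$ is a group, I would use that multiplication is block-diagonal, $\diag(S,T)\,\diag(S',T')=\diag(SS',TT')$, and that the congruence $\equiv$ restricted to $\mathcal R$ records only $(S\bmod n,\,T\bmod m)$; hence the reduction map $\mathcal R\to \SL(2,\Z_n)\times\SL(2,\Z_m)$ is a monoid isomorphism. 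Since the target is a group, so is $\mathcal R$, and being contained in $\mathcal G$ with the same unit $1_{\mathcal G}=[I_4]$ it is a subgroup.

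All the computations are routine; the only point demanding care is the bookkeeping of minors in Lemma \ref{23} for part (ii), where one must correctly match each index pair $(i,j)$ to its two submatrices and check the cancellations. Parts (i) and (iii) are essentially formal once the nilpotency $N^2=0$ and the block-diagonal multiplication rule are recorded.
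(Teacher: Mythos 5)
Your proof is correct, but in parts (ii) and (iii) it takes a genuinely different route from the paper. For (i) the two arguments coincide in substance: the paper's observation $E_{12}E_{12}=0$ is exactly your nilpotency $N^{2}=0$ written in block form, and both yield $r(k)r(1)=r(k+1)$ and $r(d)=1_{\mathcal{G}}$ (your binomial formula $(I_4+N)^{k}=I_4+kN$ also silently handles negative $k$ via $(I_4+N)^{-1}=I_4-N$, which is fine). For (ii) and (iii), however, the paper never invokes Lemma \ref{23}: it verifies $r(1)^{\ast}jr(1)=j$ and $h^{\ast}jh=j$ by direct block computation, using $E_{12}J_{2}=E_{22}=-J_{2}E_{12}$, $E_{22}E_{12}=0$, and the identity $U^{T}J_{2}U=(\det U)J_{2}$, the last of which makes membership of $\mathcal{R}$ in $\mathcal{G}$ a one-line calculation. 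Your minor-by-minor verification through the criterion of Lemma \ref{23} is logically equivalent --- that lemma is itself proved by expanding $A^{\ast}JA$ --- so your route is the paper's computation refracted through its own corollary: more systematic but with more bookkeeping (which you carry out correctly; the lone nontrivial cancellation $-1+1=0$ at $(i,j)=(2,4)$ and the vanishing of all mixed minors for block-diagonal matrices are right). The real divergence is the subgroup claim in (iii): the paper says only that it suffices to show $\mathcal{R}\sub\mathcal{G}$, tacitly using that a submonoid of the finite group $\mathcal{G}$ is a subgroup (in the spirit of Lemma \ref{20.1} and Proposition \ref{20.2}), whereas you establish an explicit monoid isomorphism $\mathcal{R}\cong\SL(2,\Z_n)\times\SL(2,\Z_m)$. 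Your version is more informative --- it identifies $\mathcal{R}$ concretely, consistent with Proposition \ref{36} and the coprime case in Corollary \ref{38} --- though you should remark that the reduction map is onto because any integer lift of a matrix in $\SL(2,\Z_n)$ still has determinant $\equiv_{n}1$ (or simply fall back on finiteness, as the paper implicitly does).
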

\begin{proof}
(i) Clearly, $r(d)=1$. Since $E_{12}E_{12}=0$ we get
$$r(k)r(1)=\begin{matt2}{I_{2}}{akE_{12}}{bkE_{12}}{I_{2}}\end{matt2}
\begin{matt2}{I_{2}}{aE_{12}}{bE_{12}}{I_{2}}\end{matt2}=$$
$$=\begin{matt2}{I_{2}}{a(k+1)E_{12}}{b(k+1)E_{12}}{I_{2}}\end{matt2}=r(k+1).$$ The rest now follows easily by induction.

(ii) Clearly,
$r(1)^{\ast}=\begin{matt2}{I_{2}}{aE_{21}}{bE_{21}}{I_{2}}\end{matt2}$.
We have
$$r(1)^{\ast}jr(1)=\begin{matt2}{I_{2}}{aE_{21}}{bE_{21}}{I_{2}}\end{matt2}
\begin{matt2}{J_{2}}{0}{0}{J_{2}}\end{matt2}
\begin{matt2}{I_{2}}{aE_{12}}{bE_{12}}{I_{2}}\end{matt2}=$$
$$=\begin{matt2}{J_{2}}{aE_{22}}{bE_{22}}{J_{2}}\end{matt2}
\begin{matt2}{I_{2}}{aE_{12}}{bE_{12}}{I_{2}}\end{matt2}=
\begin{matt2}{J_{2}+abE_{22}E_{12}}{a(J_{2}E_{12}+E_{22})}{b(E_{22}+J_{2}E_{12})}{J_{2}+abE_{22}E_{12}}\end{matt2}=j,$$
since $E_{12}J_{2}=E_{22}=-J_{2}E_{12}$ and $E_{22}E_{12}=0$.

(iii) It is enough to show that $\mathcal{R}\sub\mathcal{G}$. For $h\in\mathcal{R}$ we get that
$$h^{\ast}jh=\begin{matt2}{S^{T}}{0}{0}{T^{T}}\end{matt2}
\begin{matt2}{J_{2}}{0}{0}{J_{2}}\end{matt2}
\begin{matt2}{S}{0}{0}{T}\end{matt2}=\begin{matt2}{(\det S)J_{2}}{0}{0}{(\det T)J_{2}}\end{matt2}=j,$$ since $\det S\equiv_{n}1$ and $\det T\equiv_{m}1$.
\end{proof}

\begin{lemma}\label{27}\
\begin{enumerate}
\item[(i)] $\Big(\begin{smat2}{I_{2}}{-aE_{11}}{bE_{22}}{I_{2}}\end{smat2}\Big)=
\Big(\begin{smat2}{I_{2}}{0}{0}{-J_{2}}\end{smat2}\Big)
\Big(\begin{smat2}{I_{2}}{aE_{12}}{bE_{12}}{I_{2}}\end{smat2}\Big)\Big(\begin{smat2}{I_{2}}{0}{0}{J_{2}}\end{smat2}\Big)$.
\item[(ii)] $\Big(\begin{smat2}{I_{2}}{-aE_{22}}{bE_{11}}{I_{2}}\end{smat2}\Big)=
\Big(\begin{smat2}{J_{2}}{0}{0}{I_{2}}\end{smat2}\Big)
\Big(\begin{smat2}{I_{2}}{aE_{12}}{bE_{12}}{I_{2}}\end{smat2}\Big)\Big(\begin{smat2}{-J_{2}}{0}{0}{I_{2}}\end{smat2}\Big).$
\item[(iii)] $\Big(\begin{smat2}{I_{2}}{aE_{21}}{bE_{21}}{I_{2}}\end{smat2}\Big)=
\Big(\begin{smat2}{-J_{2}}{0}{0}{J_{2}}\end{smat2}\Big)
\Big(\begin{smat2}{I_{2}}{aE_{12}}{bE_{12}}{I_{2}}\end{smat2}\Big)\Big(\begin{smat2}{J_{2}}{0}{0}{-J_{2}}\end{smat2}\Big).$
\end{enumerate}
\end{lemma}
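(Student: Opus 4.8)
The plan is to read each of the three equations as an exact identity in $\M_{4}(\Z)$---equivalently in $\mathcal{S}$, whose product is ordinary matrix multiplication by \ref{12}---and to verify it simply by carrying out the two block multiplications on the right-hand side in $2\times 2$-block form. Since the outer conjugating factors are block-diagonal, each product collapses to a short list of products of the fixed $2\times 2$ matrices $J_{2}$ and $E_{11},E_{12},E_{21},E_{22}$. First I would record the small multiplication table that drives all three computations, read off directly from $J_{2}=\Big(\begin{smat2}{0}{1}{-1}{0}\end{smat2}\Big)$ and the matrix units: namely $J_{2}^{2}=-I_{2}$, $E_{12}J_{2}=-E_{11}$, $J_{2}E_{12}=-E_{22}$, and $E_{22}J_{2}=-E_{21}$, whence also $J_{2}E_{12}J_{2}=E_{21}$.

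For part (i) I would multiply left to right. Left-multiplying the middle factor $\Big(\begin{smat2}{I_{2}}{aE_{12}}{bE_{12}}{I_{2}}\end{smat2}\Big)$, a representative of $r(1)$, by $\Big(\begin{smat2}{I_{2}}{0}{0}{-J_{2}}\end{smat2}\Big)$ touches only the lower block row and yields $\Big(\begin{smat2}{I_{2}}{aE_{12}}{-bJ_{2}E_{12}}{-J_{2}}\end{smat2}\Big)$; right-multiplying this by $\Big(\begin{smat2}{I_{2}}{0}{0}{J_{2}}\end{smat2}\Big)$ touches only the right block column, and substituting the relations above converts the off-diagonal blocks into $-aE_{11}$ and $bE_{22}$ and the lower-right block into $-J_{2}^{2}=I_{2}$, which is exactly the left-hand side. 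Parts (ii) and (iii) have the identical structure: in (ii) the conjugation acts on the first block coordinate (by $J_{2}$ and $-J_{2}$) rather than the second, and in (iii) both block coordinates are conjugated, so applying the same relations on both sides produces $\Big(\begin{smat2}{I_{2}}{-aE_{22}}{bE_{11}}{I_{2}}\end{smat2}\Big)$ and $\Big(\begin{smat2}{I_{2}}{aE_{21}}{bE_{21}}{I_{2}}\end{smat2}\Big)$ respectively.

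There is no real obstacle here; the computation is routine. The only thing that needs care is the bookkeeping of signs together with the order of multiplication, since $E_{12}J_{2}=-E_{11}$ and $J_{2}E_{12}=-E_{22}$ are genuinely different and the two outer conjugators contribute $J_{2}$ on opposite sides. It is worth noting, for the use to which the lemma is put, that $\det(\pm J_{2})=1$, so every outer block-diagonal factor appearing on the right lies in $\mathcal{R}$ (see \ref{24}); thus the three identities exhibit these transvection-type elements of $\mathcal{S}$ as $\mathcal{R}$-conjugates of $r(1)$, which is the point in building up the subgroup generated by $\{r(1)\}\cup\mathcal{R}$.
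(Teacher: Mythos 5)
Your proposal is correct and matches the paper's approach: the paper's proof is simply ``Easy to verify. Use $-J_{2}=J_{2}^{-1}$,'' i.e.\ exactly the direct block-multiplication check you carry out, and your multiplication table ($J_{2}^{2}=-I_{2}$, $E_{12}J_{2}=-E_{11}$, $J_{2}E_{12}=-E_{22}$, $J_{2}E_{12}J_{2}=E_{21}$) verifies all three identities as claimed. Your closing observation that $\det(\pm J_{2})=1$ places the outer factors in $\mathcal{R}$, so the identities exhibit these elements as $\mathcal{R}$-conjugates of $r(1)$, is precisely how the lemma is used later (in \ref{29} via \ref{26}(i)).
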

\begin{proof}
Easy to verify. Use $-J_{2}=J_{2}^{-1}$.
\end{proof}

\begin{corollary}\label{28}
For a subset $\mathcal{A}\sub\mathcal{G}$ denote $\gen{\mathcal{A}}$
the submonoid of $\mathcal{G}$ generated by $\mathcal{A}$.
Then $\gen{\{r(1)\}\cup\mathcal{R}}$ is a subgroup of $\mathcal{G}$.
\end{corollary}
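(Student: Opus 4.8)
The plan is to show that the submonoid $M:=\gen{\{r(1)\}\cup\mathcal{R}}$ is closed under taking inverses inside $\mathcal{G}$; since $M$ is by construction closed under multiplication and contains the unit, this alone makes $M$ a subgroup. First I would record that $M$ really sits inside $\mathcal{G}$: by Lemma \ref{26}(ii) we have $r(1)\in\mathcal{G}$, and by Lemma \ref{26}(iii) $\mathcal{R}\sub\mathcal{G}$, so every generator lies in $\mathcal{G}$. Since $\mathcal{G}$ is a group (Proposition \ref{20.2}) it is closed under products and contains $1_{\mathcal{G}}$, whence $M\sub\mathcal{G}$.

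The crucial step is to verify that the inverse of each generator already belongs to $M$. For the elements of $\mathcal{R}$ this is immediate: $\mathcal{R}$ is a subgroup of $\mathcal{G}$ by Lemma \ref{26}(iii), so for $h\in\mathcal{R}$ we have $h^{-1}\in\mathcal{R}\sub M$. For the generator $r(1)$ I would use its finite order: by Lemma \ref{26}(i), $r(1)^{d}=1_{\mathcal{G}}$, hence $r(1)^{-1}=r(1)^{d-1}$ is a nonnegative power of $r(1)$ and therefore lies in the submonoid $M$. Thus every element of $\{r(1)\}\cup\mathcal{R}$ has its inverse in $M$.

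Finally I would propagate this from the generators to arbitrary elements. An arbitrary element of $M$ is a finite product $g_{1}g_{2}\cdots g_{k}$ of generators (with the empty product equal to $1_{\mathcal{G}}$); its inverse in $\mathcal{G}$ is $g_{k}^{-1}\cdots g_{2}^{-1}g_{1}^{-1}$, which is again a product of generator-inverses, each of which lies in $M$ by the previous step. Hence this inverse lies in $M$, so $M$ is closed under inversion and is a subgroup of $\mathcal{G}$.

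I do not expect a genuine obstacle here; the only point requiring care is that $r(1)^{-1}$ must be exhibited as a nonnegative power of $r(1)$ rather than introduced as a separate generator, which is exactly what the relation $r(1)^{d}=1_{\mathcal{G}}$ from Lemma \ref{26}(i) supplies. As an alternative one could argue abstractly that $M$ is a finite submonoid of the group $\mathcal{G}$ and replay the power argument of Lemma \ref{20.1}: for each $a\in M$ finiteness forces $a^{i}=a^{j}$ with $i<j$, so $a^{j-i}=1_{\mathcal{G}}$ and the nonnegative power $a^{j-i-1}\in M$ is a one-sided inverse of $a$, whence $a$ is invertible in $M$ by Lemma \ref{20.1}. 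The direct generator-wise argument above is shorter, so I would present that as the main proof and mention this variant only in passing.
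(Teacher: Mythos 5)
Your proof is correct and is essentially the paper's argument spelled out: the paper's proof of Corollary \ref{28} is just ``Follows from \ref{26}'', and what it has in mind is exactly your chain --- $r(1)^{-1}=r(1)^{d-1}$ from Lemma \ref{26}(i), $h^{-1}\in\mathcal{R}$ from Lemma \ref{26}(iii), and the observation that inverses of generators lying in the submonoid force the whole submonoid to be closed under inversion. Nothing further is needed; your aside about replaying Lemma \ref{20.1} via finiteness is a valid alternative but, as you say, unnecessary.
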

\begin{proof}
Follows from \ref{26}.
\end{proof}

In the next lemma the entries for a given matrix $X\in\mathcal{S}$,
corresponding to the element $[X]\in\mathcal{G}$, will always be
denoted in the following way:
$$X=\left(\begin{array}{cccc} x_{11}&x_{12}&a\cdot x_{13}&a\cdot
x_{14}\\ x_{21}&x_{22}&a\cdot x_{23}&a\cdot x_{24}\\ b\cdot
x_{31}&b\cdot x_{32}&x_{33}&x_{34}\\ b\cdot x_{41}&b\cdot
x_{42}&x_{43}&x_{44}\end{array}\right)\in\mathcal{S}$$ in order to
simplify the notation.
\begin{lemma}\label{29}
Let $[A]\in\mathcal{G}$, where $A\in\mathcal{S}$. Then:
\begin{enumerate}
\item[(i)] There are $[H]\in\mathcal{R}$ and $[B]\in\mathcal{G}$,
where $H\in\mathcal{S}$, $B\in\mathcal{S}$, such that
$[H][A]=[B]$, $b_{14}=b_{34}=0$ and $b_{44}=\gcd(a_{34},a_{44})$.
\item[(ii)] Let $a_{14}=a_{34}=0$. Then there are
$[H]\in\gen{\{r(1)\}\cup\mathcal{R}}$ and $[B]\in\mathcal{G}$, where
$H\in\mathcal{S}$, $B\in\mathcal{S}$, such that $[H][A]=[B]$,
$b_{14}=b_{34}=0$ and $b_{44}=1$.
\item[(iii)] Let $a_{14}=a_{34}=0$ and $a_{44}=1$.
Then there are $[H]\in\gen{\{r(1)\}\cup\mathcal{R}}$ and
$[B]\in\mathcal{G}$, where $H\in\mathcal{S}$, $B\in\mathcal{S}$,
such that $[H][A]=[B]$, $b_{31}=b_{32}=b_{13}=b_{14}=b_{24}=b_{34}=0$
and $b_{33}=b_{44}=1$.
\item[(iv)] Let $a_{31}=a_{32}=a_{13}=a_{14}=a_{24}=a_{34}=0$ and
$a_{33}=a_{44}=1$. Then $[A]\in\gen{\{r(1)\}\cup\mathcal{R}}$.
\end{enumerate}
\end{lemma}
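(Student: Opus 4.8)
The four parts form a single Gaussian elimination of the last column of $A$, carried out inside the monoid $\mathcal{S}/_{\equiv}$ by left multiplication with the generators; the chaining of (i)--(iv) is what will later yield Theorem \ref{30}. Throughout, the assertion ``$[B]\in\mathcal{G}$'' is automatic: the multipliers $[H]$ I construct lie in $\mathcal{R}$ or in $\gen{\{r(1)\}\cup\mathcal{R}}$, which are subgroups of $\mathcal{G}$ by \ref{26} and \ref{28}, and $\mathcal{G}$ is closed under products by \ref{20.2}, so $[B]=[H][A]\in\mathcal{G}$. By the multiplication rule of \ref{12}, left multiplication by $\left[\begin{smallmatrix}S&0\\0&T\end{smallmatrix}\right]$ sends $(A_{11},A_{12})\mapsto(SA_{11},SA_{12})$ and $(A_{21},A_{22})\mapsto(TA_{21},TA_{22})$, so $S$ performs integral row operations on the top two rows and $T$ on the bottom two, independently, while left multiplication by $r(1)$ and by the conjugates produced in \ref{27} couples the two halves through the off-diagonal blocks. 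Hence the only real content is to exhibit the right multipliers and read off the stated zero/one pattern; the congruences of \ref{23} supply the arithmetic that makes the normalizations possible.

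For part (i): the fourth column of $A$ is $a\,(a_{14},a_{24})^{T}$ in the top half and $(a_{34},a_{44})^{T}$ in the bottom half. I pick $T\in\SL(2,\Z)$ with $T(a_{34},a_{44})^{T}=(0,\gcd(a_{34},a_{44}))^{T}$, the standard column clearing by B\'ezout realized by a determinant-one integer matrix, and $S\in\SL(2,\Z)$ with $S(a_{14},a_{24})^{T}=(0,\ast)^{T}$. Then $[H]=\left[\begin{smallmatrix}S&0\\0&T\end{smallmatrix}\right]$ satisfies $\det S=\det T=1$, so $[H]\in\mathcal{R}$ by \ref{24}, and $[B]=[H][A]$ has $b_{14}=b_{34}=0$ and $b_{44}=\gcd(a_{34},a_{44})$.

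Part (ii) is the crux. With $a_{14}=a_{34}=0$ the second congruence of \ref{23} reads $ab\,a_{13}a_{24}+a_{33}a_{44}\equiv_{n}1$ modulo $m$, i.e. $ab\,a_{13}a_{24}+a_{33}a_{44}\equiv_{m}1$. The single move $[r(1)][A]$ replaces the bottom-right block by $ab\,E_{12}A_{12}+A_{22}$, whose second column is $(ab\,a_{24},\,a_{44})^{T}$; set $g=\gcd(ab\,a_{24},a_{44})$. Since $g$ divides both $ab\,a_{13}a_{24}$ and $a_{33}a_{44}$, it divides their sum, so the displayed congruence forces $\gcd(g,m)=1$: the new pivot is a \emph{unit} modulo $m$. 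The key point is that $\mathcal{R}$ requires only $\det T\equiv_{m}1$, not $\det T=1$: choosing $T$ first to clear that column to $(0,g)^{T}$ and then to scale by $\diag(g,g^{-1})$ (a determinant-one operation modulo $m$) sends the pivot to $1$, while an accompanying $S$ re-clears the top entry that $r(1)$ reintroduced. The resulting $[H]\in\gen{\{r(1)\}\cup\mathcal{R}}$ then gives $b_{14}=b_{34}=0$ and $b_{44}=1$. This step is the obstacle, because turning the gcd-pivot of (i) into an actual unit is exactly where the symplectic defining relation of $\mathcal{G}$ must be invoked; without the coprimality $\gcd(g,m)=1$ extracted from \ref{23}, no combination of generators could produce $b_{44}=1$, and the delicate bookkeeping is to check the normalizing moves do not disturb the already-cleared entries $b_{14},b_{34}$.

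Parts (iii) and (iv) are then elimination around the unit pivot. In (iii) I use the pivot at $(4,4)$ to clear its row and column across the four blocks: $\mathcal{R}$ row operations remove the block-diagonal entries, the coupling generators of \ref{27} cancel the cross-block entries $b_{24},b_{13},b_{31},b_{32}$ that block-diagonal operations cannot reach, and the constraint $\det A_{22}\equiv_{m}1$ forced by \ref{23} pins $b_{33}=1$, yielding the stated pattern. For (iv) the hypotheses leave $A_{22}=\left[\begin{smallmatrix}1&0\\a_{43}&1\end{smallmatrix}\right]$, $A_{12}$ and $A_{21}$ each supported on a single off-diagonal entry, and $\det A_{11}\equiv_{n}1$; I peel off $A_{11}$ by the $\mathcal{R}$-element $\left[\begin{smallmatrix}A_{11}^{-1}&0\\0&I_{2}\end{smallmatrix}\right]$ (using an integral inverse mod $n$ with determinant $\equiv_{n}1$) and remove the surviving couplings $a_{23},a_{41},a_{42},a_{43}$ by powers of $r(1)$ and the \ref{27}-conjugates. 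Here the residual mod-$d$ congruences of \ref{23} tie each surviving top-right coupling to a bottom-left one (for instance $a_{11}a_{23}\equiv_{d}a_{41}$), which is precisely the coupling realized by $r(k)$ and the \ref{27}-generators, so these entries can indeed be eliminated within $\gen{\{r(1)\}\cup\mathcal{R}}$. Reducing $A$ to $I_{4}$ in this way exhibits $[A]\in\gen{\{r(1)\}\cup\mathcal{R}}$.
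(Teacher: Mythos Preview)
Your outline is sound and part (i) is exactly the paper's construction. The remaining parts differ in execution, and parts (iii)--(iv) are too sketchy to stand as written.

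For (ii) the paper is more direct than your $r(1)$-then-rescale route. Rather than producing a pivot that is merely a unit modulo $m$, it builds
\[
H'=\begin{pmatrix}1&0&0&aa_{13}\\0&1&0&0\\0&ba_{13}&1&a_{33}\\0&0&0&1\end{pmatrix}
=r(a_{13})\cdot\begin{pmatrix}I_{2}&0\\0&\begin{smallmatrix}1&a_{33}\\0&1\end{smallmatrix}\end{pmatrix},
\]
so that the $(3,4)$-entry of $H'A$ is literally $ab\,a_{13}a_{24}+a_{33}a_{44}=1+m\varepsilon$; replacing it by $1$ in the $\equiv$-class and applying (i) gives $b_{44}=\gcd(1,a_{44})=1$ in one stroke. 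Your argument also works, but the paper's avoids the extra scaling by $\diag(g,g^{-1})$ and the bookkeeping of re-clearing the entries $r(1)$ disturbed.

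For (iii) your description ``clear the row and column of the pivot at $(4,4)$'' does not match the target pattern: what must be normalized is column $4$ and \emph{row $3$}, together with $b_{13}=0$. The paper again writes down one explicit multiplier
\[
H=\begin{pmatrix}a_{33}'&0&-aa_{13}&0\\0&1&0&-aa_{24}\\ba_{24}&0&1&0\\0&ba_{13}&0&a_{33}'\end{pmatrix},
\qquad a_{33}'=a_{33}-m\eta,\ \ ab\,a_{13}a_{24}+a_{33}'=1,
\]
factors it via \ref{27}(i),(ii), and then checks by \ref{23} that the residual $b_{31},b_{32}$ are $\equiv_{d}0$, so a representative with them zero exists. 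Your sketch gestures at this but does not exhibit the cancellation.

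For (iv) note that the paper switches to \emph{right} multiplication: it first multiplies $A$ on the right by a block-diagonal $H\in\mathcal{R}$ to put $A_{11}$ in lower-triangular form $\left(\begin{smallmatrix}d'&0\\ *&*\end{smallmatrix}\right)$, then uses \ref{23} to see $b_{42}\equiv_{d}0$ and $b_{41}\equiv_{d}d'a_{13}$, and finally factors the resulting $B'$ as an element of $\gen{\{r(1)\}\cup\mathcal{R}}$ (via \ref{27}(iii)) times an element of $\mathcal{R}$. Your plan of peeling off $A_{11}$ by left-multiplying with an inverse mod $n$ can be made to work, but it reintroduces a nonzero $b_{13}$ (since $SA_{12}$ need not keep the first row zero), so you must then invoke the mod-$d$ ties you mention and chase several more cancellations; the paper's right-multiplication sidesteps this.
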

\begin{proof}
(i) There are $\alpha,\beta\in\Z$ such that $\alpha a_{14}+\beta  a_{24}=\gcd(a_{14},a_{24})=d'$ and $\gamma,\delta\in\Z$ such that $\gamma a_{34}+\delta a_{44}=\gcd(a_{34},a_{44})=d''$. Put
$$H=\left(\begin{array}{cccc} \frac{a_{24}}{d'}&\frac{-a_{14}}{d'}&0&0\\ \alpha&\beta&0&0\\ 0&0&\frac{a_{44}}{d''}&\frac{-a_{34}}{d''}\\ 0&0&\gamma&\delta \end{array}\right).$$ Then $[H]\in\mathcal{R}$. Let $u\in\Z^{4}$ be the last column of $A$. Then
$$Hu=\left(\begin{array}{cccc} \frac{a_{24}}{d'}&\frac{-a_{14}}{d'}&0&0\\ \alpha&\beta&0&0\\ 0&0&\frac{a_{44}}{d''}&\frac{-a_{34}}{d''}\\ 0&0&\gamma&\delta \end{array}\right)
\left(\begin{array}{cccc}a a_{14}\\ a a_{24}\\ a_{34}\\a_{44}\end{array}\right)=
\left(\begin{array}{cccc}0\\ a d'\\ 0\\ d''\end{array}\right).$$
Now put $B=HA$ and the rest is clear.

(ii) Let $a_{14}=a_{34}=0$. By \ref{23}, we have $1\equiv_{m}ab\det \widetilde{A}_{(1,2)}^{(3,4)}+\det \widetilde{A}_{(3,4)}^{(3,4)}=aba_{13}a_{24}+a_{33}a_{44}$, hence $aba_{13}a_{24}+a_{33}a_{44}=1+m\varepsilon$ for some $\varepsilon\in\Z$.
We have $$H'= \left(\begin{array}{cccc} 1&0&0&aa_{13}\\ 0&1&0&0\\
0&ba_{13}&1&a_{33}\\ 0&0&0&1 \end{array}\right)=
\left(\begin{array}{cccc} 1&0&0&aa_{13}\\ 0&1&0&0\\
0&ba_{13}&1&0\\ 0&0&0&1 \end{array}\right)
\left(\begin{array}{cccc} 1&0&0&0\\ 0&1&0&0\\ 0&0&1&a_{33}\\
0&0&0&1 \end{array}\right).$$ Hence, by \ref{26},
$[H']\in\gen{\{r(1)\}\cup\mathcal{R}}$. Let $u\in\Z^{4}$ be the
last column of $A$. Then
$$H'u=\left(\begin{array}{cccc} 1&0&0&aa_{13}\\ 0&1&0&0\\ 0&ba_{13}&1&a_{33}\\ 0&0&0&1 \end{array}\right)\left(\begin{array}{cccc}0\\a a_{24}\\0\\a_{44}\end{array}\right)=
\left(\begin{array}{cccc}aa_{13}a_{44}\\a
a_{14}\\1+m\varepsilon\\a_{44}\end{array}\right).$$ Put $A'=H'A$. Let
$A''\in\mathcal{S}$ be a matrix that differs form $A'$ only on the
position $(3,4)$, where $1$ is instead of $1+m\varepsilon$. Then
$A''\equiv A'$ and thus $[A'']=[H'][A]$. Using (i) for $[A'']$, we
get that there are $H'',B\in\mathcal{S}$ such that
$[H'']\in\mathcal{R}$, $[B]\in\mathcal{G}$, $[H''][A'']=[B]$,
$b_{14}=b_{34}=0$ and $b_{44}=\gcd(1,a_{44})=1$. Now, just put
$H=H''H'$. Then $[H][A]=[H''][H'][A]=[H''][A'']=[B]$. Clearly,
$[H]\in\gen{\{r(1)\}\cup\mathcal{R}}$.

(iii) By \ref{23}, we have $1\equiv_{m}ab\det
\widetilde{A}^{(3,4)}_{(1,2)}+\det
\widetilde{A}_{(3,4)}^{(3,4)}=aba_{13}a_{24}+a_{33}$, hence
$aba_{13}a_{24}+a_{33}=1+m\eta$ for some $\eta\in\Z$. Put
$a'_{33}=a_{33}-m\eta$. Then $$H=
\left(\begin{array}{cccc} a'_{33}&0&-aa_{13}&0\\ 0&1&0&-aa_{24}\\
ba_{24}&0&1&0\\ 0&ba_{13}&0&a'_{33} \end{array}\right)=
\left(\begin{array}{cccc} 1&0&-aa_{13}&0\\ 0&1&0&0\\ 0&0&1&0\\
0&ba_{13}&0&1 \end{array}\right) \left(\begin{array}{cccc}
1&0&0&0\\ 0&1&0&-aa_{24}\\ ba_{24}&0&1&0\\ 0&0&0&1
\end{array}\right)$$ since $aba_{13}a_{24}+a'_{33}=1$. By
\ref{27}(i),(ii) and \ref{26}(i),
$[H]\in\gen{\{r(1)\}\cup\mathcal{R}}$.

Let $A'\in\mathcal{S}$ be a matrix that differs form $A$ only on the position $(3,3)$, where $a'_{33}$ is instead of $a_{33}$. Then $A'\equiv A$. Now, put $B'=HA'$. Let $u$ be the 3-rd column of $A'$ and $v$ be the 4-th column of $A'$. Then $$H\cdot(u,v)=\left(\begin{array}{cccc} a'_{33}&0&-aa_{13}&0\\ 0&1&0&-aa_{24}\\ ba_{24}&0&1&0\\ 0&ba_{13}&0&a'_{33} \end{array}\right)
\left(\begin{array}{cccc} a a_{13}&0\\ a a_{23}&a a_{24}\\a'_{33}&0\\  a_{43}&1\end{array}\right)=
\left(\begin{array}{cccc} 0&0\\ a b'_{23}&0\\1&0\\  b'_{43}&1\end{array}\right)$$
where $b'_{23}=a_{23}-a_{24}a_{43}$ and $b'_{43}=aba_{13}a_{13}+a'_{33}a_{43}$. Clearly, $\det B'^{(1,4)}_{(1,2)}=0$, $\det B'^{(2,4)}_{(1,2)}=0$, $\det B'^{(1,4)}_{(3,4)}=bb'_{31}$ and $\det B'^{(2,4)}_{(3,4)}=bb'_{32}$. Now, using \ref{23}, we get $b'_{31}\equiv_{d}0$ and $b'_{32}\equiv_{d}0$.

Finally, let $B\in\mathcal{S}$ be a matrix that differs form $B'$
only on the position $(3,1)$, where $0$ is instead of $b'_{31}$,
and on the position $(3,2)$, where $0$ is instead of $b'_{32}$. Then $B\equiv B'$. Hence $[B]=[B']\in\mathcal{G}$ and
$[H][A]=[H][A']=[B']=[B]$.

(iv) There are $\nu,\tau\in\Z$ such that $\nu a_{11}+\tau
a_{12}=\gcd(a_{11},a_{12})=d'$. Put $$H=\left(\begin{array}{cccc}
\nu&\frac{-a_{12}}{d'}&0&0\\ \tau&\frac{a_{11}}{d'}&0&0\\
0&0&1&0\\ 0&0&0&1 \end{array}\right).$$ Clearly,
$[H]\in\mathcal{R}$. Put $B=AH$. Then
$$AH=\left(\begin{array}{cccc} a_{11}&a_{12}&0&0\\
a_{21}&a_{22}&aa_{13}&0\\ 0&0&1&0\\ ba_{41}&ba_{42}&a_{43}&1
\end{array}\right) \left(\begin{array}{cccc}
\nu&\frac{-a_{12}}{d'}&0&0\\ \tau&\frac{a_{11}}{d'}&0&0\\
0&0&1&0\\ 0&0&0&1 \end{array}\right)= \left(\begin{array}{cccc}
d'&0&0&0\\ b_{21}&b_{22}&aa_{13}&0\\ 0&0&1&0\\
bb_{41}&bb_{42}&a_{43}&1 \end{array}\right)$$ for some
$b_{ij}\in\Z$.

By \ref{23}, we have $0\equiv_{d}\det \widetilde{B}^{(2,3)}_{(1,2)}+\det
\widetilde{B}^{(2,3)}_{(3,4)}=-b_{42}$ and $0\equiv_{d}\det
\widetilde{B}^{(1,3)}_{(1,2)}+\det \widetilde{B}^{(1,3)}_{(3,4)}=d'a_{13}-b_{41}$.
Hence $b_{42}\equiv_{d}0$ and $b_{41}\equiv_{d}d'a_{13}$. Let
$B'\in\mathcal{S}$ be a matrix that differs form $B$ only on the
position $(4,1)$, where $bd'a_{13}$ is instead of $bb_{41}$, and
on the position $(4,2)$, where $0$ is instead of $bb_{42}$. Then $[B']=[B]\in\mathcal{G}$ and
$[A][H]=[B]=[B']$.

 Further
$$B'=\left(\begin{array}{cccc} d'&0&0&0\\ b_{21}&b_{22}&aa_{13}&0\\ 0&0&1&0\\ bd'a_{13}&0&a_{43}&1 \end{array}\right)=
\left(\begin{array}{cccc} 1&0&0&0\\ 0&1&aa_{13}&0\\ 0&0&1&0\\
ba_{13}&0&0&1 \end{array}\right) \left(\begin{array}{cccc}
d'&0&0&0\\ b_{21}&b_{22}&0&0\\ 0&0&1&0\\ 0&0&a_{43}&1
\end{array}\right).$$ Denote $C$ the first matrix in the
decomposition of $B'$ and $G$ the second one (i.e. $B'=CG$). By
\ref{27}(iii) and  \ref{26}(i),
$[C]\in\gen{\{r(1)\}\cup\mathcal{R}}$. Hence $[C]$ is invertible
in $\mathcal{G}$,  $[G]=[C]^{-1}[B']\in\mathcal{G}$ and it is easy
to see, by \ref{23}, that $[G]\in\mathcal{R}$. Hence
$[B']\in\gen{\{r(1)\}\cup\mathcal{R}}$.

Finally, $[A]=[B'][H]^{-1}\in\gen{\{r(1)\}\cup\mathcal{R}}$.
\end{proof}

\begin{theorem}\label{30}
$\mathcal{G}$ is a group generated by $\{r(1)\}\cup\mathcal{R}$,
where $r(k)$ and $\mathcal{R}$ are defined in \ref{24}.
\end{theorem}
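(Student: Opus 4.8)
The plan is to prove the two inclusions separately and to lean entirely on the reduction machinery of Lemma \ref{29}. The inclusion $\gen{\{r(1)\}\cup\mathcal{R}}\sub\mathcal{G}$ is already in hand: Corollary \ref{28} tells us that the submonoid $\gen{\{r(1)\}\cup\mathcal{R}}$ is in fact a subgroup of $\mathcal{G}$, and in particular it is closed under inversion. Since $\mathcal{G}$ itself is a finite group by Proposition \ref{20.2}, the whole theorem reduces to the reverse inclusion $\mathcal{G}\sub\gen{\{r(1)\}\cup\mathcal{R}}$, that is, to showing that every class $[A]\in\mathcal{G}$, with $A\in\mathcal{S}$, can be expressed through the generators $\{r(1)\}\cup\mathcal{R}$.

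Next I would take an arbitrary $[A]\in\mathcal{G}$ and apply the four parts of Lemma \ref{29} in succession, checking at each interface that the conclusion of one part is verbatim the hypothesis of the next. Step (i) produces $[H_1]\in\mathcal{R}$ with $[H_1][A]=[A_1]$ and $(A_1)_{14}=(A_1)_{34}=0$; these are exactly the hypotheses $a_{14}=a_{34}=0$ required by step (ii). Applying (ii) to $[A_1]$ yields $[H_2]\in\gen{\{r(1)\}\cup\mathcal{R}}$ with $[H_2][A_1]=[A_2]$, where in addition $(A_2)_{44}=1$, which matches the hypotheses of (iii). Applying (iii) to $[A_2]$ gives $[H_3]\in\gen{\{r(1)\}\cup\mathcal{R}}$ with $[H_3][A_2]=[A_3]$ satisfying $(A_3)_{31}=(A_3)_{32}=(A_3)_{13}=(A_3)_{14}=(A_3)_{24}=(A_3)_{34}=0$ and $(A_3)_{33}=(A_3)_{44}=1$; these are precisely the hypotheses of (iv). Finally (iv) concludes $[A_3]\in\gen{\{r(1)\}\cup\mathcal{R}}$. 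Throughout, I would record that each intermediate class stays inside $\mathcal{G}$ — which each part of Lemma \ref{29} explicitly guarantees — so that the next part is applicable.

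To finish, I would collect the chain into the single identity $[H_3][H_2][H_1][A]=[A_3]$. Here $[H_1]\in\mathcal{R}\sub\gen{\{r(1)\}\cup\mathcal{R}}$, while $[H_2],[H_3]$ and $[A_3]$ already lie in $\gen{\{r(1)\}\cup\mathcal{R}}$. Because that set is a group by Corollary \ref{28}, it is closed under inverses, so $[A]=[H_1]^{-1}[H_2]^{-1}[H_3]^{-1}[A_3]\in\gen{\{r(1)\}\cup\mathcal{R}}$. This yields $\mathcal{G}\sub\gen{\{r(1)\}\cup\mathcal{R}}$, and combined with Corollary \ref{28} the desired equality $\mathcal{G}=\gen{\{r(1)\}\cup\mathcal{R}}$, the group structure coming from Proposition \ref{20.2}.

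I expect no genuine obstacle in this assembly itself: for Theorem \ref{30} the argument is a four-fold telescoping of Lemma \ref{29}, and the only point demanding care is confirming that the vanishing-entry and diagonal-$1$ conclusions of parts (i)--(iii) coincide exactly with the hypotheses of parts (ii)--(iv). The real difficulty of the result has been front-loaded into Lemma \ref{29}, where the membership criterion of Lemma \ref{23} is invoked repeatedly to force the relevant $2\times2$ minors to vanish modulo $d$ — this is what makes the successive sweeping matrices land in $\mathcal{R}$ (or in $\gen{\{r(1)\}\cup\mathcal{R}}$), and in particular it is what lets the cross-block entries, coupled through the factor $ab$, be cleared using the mixing generator $r(1)$.
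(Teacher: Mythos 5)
Your proposal is correct and takes essentially the same route as the paper: the paper's proof of Theorem \ref{30} reads simply ``Follows immediately from \ref{29}'', i.e.\ exactly your telescoping application of Lemma \ref{29}(i)--(iv), with Corollary \ref{28} supplying the inclusion $\gen{\{r(1)\}\cup\mathcal{R}}\sub\mathcal{G}$ and the closure under inverses needed to solve for $[A]$. Your explicit checks that the conclusions of parts (i)--(iii) match the hypotheses of parts (ii)--(iv), and that each intermediate class remains in $\mathcal{G}$, are precisely the details the paper leaves implicit.
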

\begin{proof}
Follows immediately from \ref{29}.
\end{proof}


\end{document}